\newtheorem{thm}{Theorem}
\newtheorem{cor}[thm]{Corollary}
\newtheorem{remark}[thm]{Remark}
\newtheorem{lemma}[thm]{Lemma}
\newtheorem{prop}[thm]{Proposition}
\newtheorem{exam}[thm]{Example}
\newcommand{\bra}[1]{\langle #1 |}
\newcommand{\ket}[1]{| #1 \rangle}
\newcommand{\braket}[2]{\langle #1 | #2 \rangle}
\newcommand{\ketbra}[2]{| #1 \rangle\langle #2 |}
\newcommand{\Tr}{{\rm Tr}}
\newcommand{\bb}[1]{\mathbb{#1}}
\newcommand{\cl}[1]{\mathcal{#1}}
\newcommand{\mnorm}[1]{%
\left\vert\kern-0.9pt\left\vert\kern-0.9pt\left\vert #1
\right\vert\kern-0.9pt\right\vert\kern-0.9pt\right\vert}
\newcommand{\bigmnorm}[1]{%
\big\vert\kern-0.9pt\big\vert\kern-0.9pt\big\vert #1
\big\vert\kern-0.9pt\big\vert\kern-0.9pt\big\vert}
\title[Quantum Gate Fidelity in Terms of Choi Matrices]{Quantum Gate Fidelity in Terms of Choi Matrices}
\author[N.~Johnston]{Nathaniel Johnston}
\address{Department of Mathematics and Statistics, University of Guelph,
Guelph, Ontario N1G 2W1, Canada}
\email{njohns01@uoguelph.ca}
\author[D.~W.~Kribs]{David W. Kribs}
\address{Department of Mathematics and Statistics, University of Guelph, Guelph, Ontario N1G 2W1, Canada and Institute for Quantum Computing, University of Waterloo, Waterloo, Ontario N2L 3G1, Canada}
\email{dkribs@uoguelph.ca}
\begin{document}

\begin{abstract}
We provide new results for computing and comparing the quantum gate fidelity of quantum channels via their Choi matrices. We extend recent work that showed there exist non-dual pairs of quantum channels with equal gate fidelity by providing an explicit characterization of all such channels. We use our characterization to show that when the dimension is $2$ (or $3$, under slightly stronger hypotheses), the gate fidelity of two channels is equal if and only if their difference equals the difference of some unital map and its dual -- a fact that has been shown to be false when the dimension is $4$ or larger. We also present a formula for the minimum gate fidelity of a channel in terms of a well-studied norm on a compression of its Choi matrix. As a consequence, several new ways of bounding and approximating the minimum gate fidelity follow, including a simple semidefinite program to compute it for qubit channels.\medskip

\noindent {\bf Keywords:} quantum gate fidelity, quantum channel, Choi matrix, symmetric subspace\medskip

\noindent PACS numbers: 03.67.Hk, 03.67.Lx, 02.10.Yn
\end{abstract}

\maketitle

\section{Introduction}

In quantum information theory, many of the most important quantum operations are represented ideally by unitary transformations \cite{NC00}. Experimentally, however, gates are imperfectly implemented via trace-preserving, completely positive maps (called quantum channels). The hope is that the quantum channel which is implemented is in some sense ``close'' to the desired unitary channel. One of the most common techniques to measure the distance between quantum channels and unitary channels is via the quantum gate fidelity. The gate fidelity is a function, sending pure states to real numbers, that measures the amount of overlap between the output of the unitary channel and the output of the implemented quantum channel. The goal of the present paper is to characterize the gate fidelity in terms of the Choi matrix \cite{C75} of a given channel.

Recent work \cite{Mag11} has investigated under what conditions two different quantum channels can have the same gate fidelity. It was demonstrated in that there exist non-dual pairs of channels with the same gate fidelity in all dimensions $n \geq 4$. We extend this work by providing an easily-testable necessary and sufficient condition that described exactly when two channels have the same gate fidelity. We use our characterization to show that for qubit channels, the only way for two channels to have the same gate fidelity is if their difference is the scaled difference of a unital channel and its dual. This property still holds for channels on a $3$-dimensional system as long as the channels are unital, but it fails for higher-dimensional channels.

Our characterization of the gate fidelity is in terms of a compression of the Choi matrix of the channel. In addition to being useful for characterizing equality of the gate fidelity, we show that it can be used to calculate the average and variance of the gate fidelity. Furthermore, we show that the minimum gate fidelity can be phrased in terms of a certain norm \cite{JK10a,JK10b} on that operator, which allows us to bound the minimum gate fidelity in a variety of new ways and easily compute it for qubit channels.

The remainder of the paper is organized as follows. In Section~\ref{sec:channels} we introduce the mathematical basics of quantum channels, the gate fidelity, and the measures based on the gate fidelity that are most frequently used. In Section~\ref{sec:choi} we present the symmetric subspace, flip operator, and Choi matrices, which are the key ingredients in our characterization of the gate fidelity. Section~\ref{sec:characterizeEqual} contains our main result, which characterizes gate fidelity in terms of the channel's Choi matrix, and Section~\ref{sec:SmallDims} specializes our result to the case when the channel acts on a $2$- or $3$-dimensional system. We close in Section~\ref{sec:minGateFid} by exploring how our characterization applies to minimum gate fidelity and we demonstrate how to calculate it for qubit channels.

\section{Quantum Channels and Gate Fidelity}\label{sec:channels}

Throughout this work, we will denote a (finite-dimensional) complex Hilbert space of dimension $n$ by $\cl{H}_n$. The space of linear operators on $\cl{H}_n$ will be denoted $\cl{L}(\cl{H}_n)$. We will use $I$ to denote the identity operator on $\cl{H}_n$ and $id_n$ to denote the identity operator on $\cl{L}(\cl{H}_n)$. It will sometimes be useful to associate $\cl{H}_n$ with $\bb{C}^n$, and $\cl{L}(\cl{H}_n)$ with the space of $n \times n$ complex matrices via matrix representations of operators in a given orthonormal basis, so we will do so freely without making special mention of that association.

A \emph{pure quantum state} is represented by a unit vector $\ket{\phi} \in \cl{H}_n$. We will denote the standard basis of $\cl{H}_n$ by $\ket{0},\ket{1},\ldots,\ket{n-1}$. We will frequently work with bipartite Hilbert spaces $\cl{H}_n \otimes \cl{H}_n$, and we will make use of the shorthand notation $\ket{\phi\psi} := \ket{\phi} \otimes \ket{\psi} \in \cl{H}_n \otimes \cl{H}_n$. One particularly important bipartite pure state is the \emph{maximally entangled} state $\ket{\psi_+} := \frac{1}{\sqrt{n}}\sum_{j=0}^{n-1}\ket{jj}$.

Not all quantum states are pure however, and a general \emph{mixed quantum state} is represented by positive-semidefinite operator $\rho \in \cl{L}(\cl{H}_n)$ with $\Tr(\rho) = 1$, where $\Tr(\cdot)$ denotes the trace. Note that a pure state $\ket{\phi}$ can be represented by the operator $\ketbra{\phi}{\phi}$, where $\bra{\phi} := \ket{\phi}^\dagger$ is the dual vector of $\ket{\phi}$. The \emph{fidelity} \cite{Uhl76,Joz94} of two quantum states $\rho$ and $\sigma$ is defined by
\begin{align*}
	\cl{F}(\rho,\sigma) := \Tr\left(\sqrt{\sqrt{\rho}\sigma\sqrt{\rho}}\right)^2,
\end{align*}
which reduces in the case when $\sigma$ is a pure state to simply $\cl{F}(\rho,\ketbra{\phi}{\phi}) = \bra{\phi}\rho\ket{\phi}$. The fidelity can be thought of as a measure of how well $\rho$ and $\sigma$ can be distinguished, and it satisfies $0 \leq \cl{F}(\rho,\sigma) \leq 1$ with $\cl{F}(\rho,\sigma) = 1$ if and only if $\rho = \sigma$.

A \emph{quantum channel} is a completely positive, trace-preserving linear map $\cl{E} : \cl{L}(\cl{H}_n) \rightarrow \cl{L}(\cl{H}_n)$. Every quantum channel admits a family of \emph{Kraus operators} \cite{NC00} $\{E_i\}$ such that $\cl{E}(\rho) = \sum_i E_i \rho E_i^{\dagger}$ for all $\rho$ and $\sum_i E_i^{\dagger} E_i = I$. Given a channel $\cl{E} : \cl{L}(\cl{H}_n) \rightarrow \cl{L}(\cl{H}_n)$, its \emph{dual channel} $\cl{E}^{\dagger} : \cl{L}(\cl{H}_n) \rightarrow \cl{L}(\cl{H}_n)$ is defined via the Hilbert-Schmidt inner product to be the unique map such that $\Tr(\cl{E}(X)Y) = \Tr(X\cl{E}^{\dagger}(Y))$ for all $X,Y \in \cl{L}(\cl{H}_n)$. It is the case that $\cl{E}^{\dagger}$ is completely positive if and only if $\cl{E}$ is completely positive, and $\cl{E}$ is trace-preserving if and only if $\cl{E}^{\dagger}$ is unital -- both of these facts can be seen by noting by cyclicity of the trace shows that
\begin{align*}
	\Tr(\cl{E}(X)Y) = \Tr\left( \sum_i E_i X E_i^{\dagger}Y\right) = \Tr\left( X \sum_i E_i^{\dagger}Y E_i\right),
\end{align*}
which implies that if $\{E_i\}$ is a family of Kraus operators for $\cl{E}$ then $\{E_i^{\dagger}\}$ is a family of Kraus operators for $\cl{E}^{\dagger}$.

In the special case when a channel $\cl{U}$ satisfies $\cl{U}(\rho) = U\rho U^\dagger$ for some unitary operator $U$, $\cl{E}$ is called a \emph{unitary channel}. Unitary channels are exactly the channels that do not introduce mixedness (i.e., decoherence) into states and thus they very often are the types of channels that are meant to be implemented in experimental settings. However, no implementation of a channel is perfect -- errors are introduced that cause the channel that is implemented to not actually be unitary. The \emph{gate fidelity} is a tool for comparing how well the implemented quantum channel $\cl{E}$ approximates the desired unitary channel $\cl{U}$. Gate fidelity is a function defined on pure states as follows:
\begin{align*}
	\cl{F}_{\cl{E},\cl{U}}(\ket{\phi}) := \cl{F}(\cl{E}(\ketbra{\phi}{\phi}),\cl{U}(\ketbra{\phi}{\phi})) = \bra{\phi}U^\dagger \cl{E}(\ketbra{\phi}{\phi})U\ket{\phi}.
\end{align*}

Without loss of generality, we can assume $U = I$ by noting that
\begin{align*}
	\bra{\phi}U^\dagger \cl{E}(\ketbra{\phi}{\phi})U\ket{\phi} = \bra{\phi} (\cl{U}^\dagger \circ \cl{E})(\ketbra{\phi}{\phi})\ket{\phi},
\end{align*}
so $\cl{F}_{\cl{E},\cl{U}} = \cl{F}_{\cl{U}^\dagger \circ \cl{E},id_n}$. For brevity, we will use the shorthand $\cl{F}_{\cl{E}} := \cl{F}_{\cl{E},id_n}$, which can be thought of as measuring how noisy the channel $\cl{E}$ is. It will also be useful occasionally to consider the gate fidelity of linear maps that are not actually channels. That is, for \emph{any} linear map $\Lambda : \cl{L}(\cl{H}_n) \rightarrow \cl{L}(\cl{H}_n)$ we define $\cl{F}_{\Lambda}(\ket{\phi}) = \bra{\phi}\Lambda(\ketbra{\phi}{\phi})\ket{\phi}$.

The two most well-studied distance measures based on the gate fidelity are the \emph{average gate fidelity} $\overline{\cl{F}_{\cl{E}}}$ \cite{HHH99,N02,BOSBJ02,GLN05,EAZ05} and the \emph{minimum gate fidelity} \cite{NC00,GLN05}
\begin{align}
\cl{F}_{\cl{E}}^{min} = \min_{\ket{\phi}} \cl{F}_{\cl{E}} (\ket{\phi}),
\end{align}
which are obtained by either averaging (via the Fubini-Study measure \cite{BZ06}) or minimizing over all pure states $\ket{\phi}$, respectively. The minimum gate fidelity has the interpretation as the most noise that $\cl{E}$ can introduce into a quantum system. It makes sense then that one might want instead to minimize $\cl{F}(\cl{E}(\rho),\rho)$ over all mixed states $\rho$. The reason we minimize over pure states is that joint concavity of fidelity implies that minimizing over mixed states $\rho$ gives the exact same quantity $\cl{F}_{\cl{E}}^{min}$ as minimizing over pure states $\ket{\phi}$ -- see \cite[Section 9.3]{NC00} or \cite[Section IV.C]{GLN05} for a proof of this fact.

One of the most celebrated results concerning gate fidelity is an explicit formula for the average gate fidelity of a quantum channel in terms of its Kraus operators $\{E_i\}$ \cite{HHH99,N02}:
\begin{align}\label{eq:avg_fid}
	\overline{\cl{F}_{\cl{E}}} = \frac{n + \sum_i |\Tr(E_i)|^2}{n(n+1)}.
\end{align}
Similarly, higher-order moments of the gate fidelity have been computed \cite{PMM08,MBE09}. However, the minimum gate fidelity seems to be much more difficult to compute -- for some partial results and bounds on minimum gate fidelity, see \cite{KBO09,LRKKB11}. In Section~\ref{sec:minGateFid} we will derive a formula for the minimum gate fidelity that will allow us to efficiently compute it for qubit channels and derive new bounds for it in general.

\section{Choi Matrices and the Symmetric Subspace}\label{sec:choi}

To every linear map $\cl{E} : \cl{L}(\cl{H}_n) \rightarrow \cl{L}(\cl{H}_n)$ there is an associated \emph{Choi matrix}:
\[
C_{\cl{E}} = (id_n \otimes \cl{E})(n\ketbra{\psi_+}{\psi_+}).
\]
This identification between linear maps on $\cl{L}(\cl{H}_n)$ and operators in $\cl{L}(\cl{H}_n \otimes \cl{H}_n)$ is known as the \emph{Choi-Jamiolkowski isomorphism}. A celebrated result of Choi says that $\cl{E}$ is completely positive if and only if $C_{\cl{E}}$ is positive semidefinite \cite{C75}. Similarly, $\cl{E}$ is trace-preserving if and only if $\Tr_1(C_{\cl{E}}) = I$ and $\cl{E}$ is unital (i.e., $\cl{E}(I) = I$) if and only if $\Tr_2(C_{\cl{E}}) = I$, where $\Tr_j$ denotes the partial trace over the $j$th subsystem.

The \emph{flip operator} $F \in \cl{L}(\cl{H}_n) \otimes \cl{L}(\cl{H}_n)$ is the Choi matrix of the transpose map $T$. Its name comes from the fact that $F\ket{\phi\psi} = \ket{\psi\phi}$ for any $\ket{\phi},\ket{\psi} \in \cl{H}_n$. The \emph{symmetric subspace} $\cl{S} \subseteq \cl{H}_n \otimes \cl{H}_n$ is the subspace spanned by the states of the form $\ket{i}\ket{j} + \ket{j}\ket{i} \in \cl{H}_n \otimes \cl{H}_n$. The Takagi factorization \cite{Tag24,HJ85} of complex symmetric matrices (and hence symmetric states) says that $\ket{\phi} \in \cl{S}$ if and only if $\ket{\phi}$ has a symmetric Schmidt decomposition: $\ket{\phi} = \sum_{j=1}^n \alpha_j \ket{\phi_j \phi_j}$. We will denote the projection of $\cl{H}_n \otimes \cl{H}_n$ onto $\cl{S}$ by $P_{\cl{S}}$. Notice that $P_{\cl{S}} = \frac{1}{2}(I + F)$ and that the dimension of $\cl{S}$ is $n(n+1)/2$.

We now present a simple proposition concerning the average gate fidelity. While this proposition may not appear particularly useful considering we already have Equation~(\ref{eq:avg_fid}) to work with, it gives $\overline{\cl{F}_{\cl{E}}}$ in terms of the operator $P_{\cl{S}}(T \otimes id_n)(C_{\cl{E}})P_{\cl{S}}$. We will see this operator repeatedly throughout this work, most notably in a characterization of the gate fidelity and also in a formula for the minimum gate fidelity, and hence it is useful to see how it relates to the average gate fidelity as well. Observe that the scaling factor $2/(n(n+1))$ in the following result is exactly one divided by the dimension of $\cl{S}$, so average gate fidelity can be seen as an average over the symmetric subspace.

\begin{prop}\label{prop:avg_gate_fid}
	Let $\cl{E} : \cl{L}(\cl{H}_n) \rightarrow \cl{L}(\cl{H}_n)$ be a quantum channel. Then
	\begin{align*}
		\overline{\cl{F}_{\cl{E}}} = \frac{2}{n(n+1)}\Tr\big(P_{\cl{S}}(T \otimes id_n)(C_{\cl{E}})P_{\cl{S}}\big).
	\end{align*}
\end{prop}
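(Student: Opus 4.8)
The plan is to start from the known formula (\ref{eq:avg_fid}) for $\overline{\cl{F}_{\cl{E}}}$ in terms of Kraus operators and show that the right-hand side of the proposition equals $\frac{n + \sum_i |\Tr(E_i)|^2}{n(n+1)}$; equivalently, that $\Tr\big(P_{\cl{S}}(T \otimes id_n)(C_{\cl{E}})P_{\cl{S}}\big) = \frac{1}{2}\big(n + \sum_i |\Tr(E_i)|^2\big)$. First I would write $C_{\cl{E}} = \sum_i (I \otimes E_i)\ketbra{\Omega}{\Omega}(I \otimes E_i)^\dagger$ where $\ket{\Omega} = \sqrt{n}\ket{\psi_+} = \sum_j \ket{jj}$, so that $(T \otimes id_n)(C_{\cl{E}}) = \sum_i \big((T \otimes E_i)\ketbra{\Omega}{\Omega}(T \otimes E_i^\dagger)\big)$ — more precisely, applying the transpose on the first system to each rank-one term. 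Using the standard identity $(A \otimes I)\ket{\Omega} = (I \otimes A^T)\ket{\Omega}$, one gets $(id_n \otimes E_i)\ket{\Omega} = (E_i^T \otimes I)\ket{\Omega}$, and after the transpose on the first leg the $i$th term of $(T\otimes id_n)(C_{\cl{E}})$ becomes the rank-one operator $\ketbra{v_i}{v_i}$ with $\ket{v_i} := (E_i \otimes I)\ket{\Omega} = (I \otimes E_i^T)\ket{\Omega}$. (I will keep careful track of where the transpose lands; this bookkeeping is the one place to be careful.)

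Next, using $P_{\cl{S}} = \frac{1}{2}(I+F)$ and cyclicity of the trace, $\Tr\big(P_{\cl{S}}\ketbra{v_i}{v_i}P_{\cl{S}}\big) = \bra{v_i}P_{\cl{S}}\ket{v_i} = \frac{1}{2}\big(\braket{v_i}{v_i} + \bra{v_i}F\ket{v_i}\big)$. Then I would compute the two pieces separately. For the norm term, $\braket{v_i}{v_i} = \bra{\Omega}(E_i^\dagger E_i \otimes I)\ket{\Omega} = \Tr(E_i^\dagger E_i)$, and summing over $i$ and invoking the Kraus completeness relation $\sum_i E_i^\dagger E_i = I$ gives $\sum_i \braket{v_i}{v_i} = \Tr(I) = n$. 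For the flip term, using $F\ket{\phi\psi} = \ket{\psi\phi}$ one finds $\bra{v_i}F\ket{v_i} = \bra{\Omega}(E_i^\dagger \otimes I)F(E_i \otimes I)\ket{\Omega}$; since $F(A \otimes B) = (B \otimes A)F$ and $F\ket{\Omega} = \ket{\Omega}$, this collapses to $\bra{\Omega}(E_i^\dagger \otimes E_i)\ket{\Omega}$, which is a standard ``sandwiched maximally entangled'' expression equal to $\Tr(E_i^\dagger E_i^T)$... here I should instead track it as $|\Tr(E_i)|^2$: writing $\ket{\Omega} = \sum_{jk}\ket{jk}\delta_{jk}$ and expanding, $\bra{\Omega}(E_i^\dagger \otimes E_i)\ket{\Omega} = \sum_{j,k} \overline{(E_i)_{kj}}(E_i)_{jk}$... the clean route is $\bra{\Omega}(A \otimes B)\ket{\Omega} = \Tr(A^T B)$, so $\bra{v_i}F\ket{v_i} = \Tr\big(\overline{E_i}\,E_i\big)$ — and then to reconcile with (\ref{eq:avg_fid}) I would instead keep $\ket{v_i}$ in the form $(I \otimes E_i^T)\ket{\Omega}$, giving $\bra{v_i}F\ket{v_i} = \bra{\Omega}(I \otimes \overline{E_i})F(I \otimes E_i^T)\ket{\Omega} = \bra{\Omega}(\overline{E_i}\otimes I)(I\otimes E_i^T)\ket{\Omega} = \Tr\big((\overline{E_i})^T E_i^T\big) = \Tr(E_i^\dagger E_i^T)$; yet another application of $\bra{\Omega}(A\otimes B)\ket{\Omega}=\Tr(A^TB)$ directly on $\ket{v_i}=(E_i\otimes I)\ket{\Omega}$ gives $\bra{v_i}F\ket{v_i}=\bra\Omega(E_i^\dagger\otimes E_i)F\ket\Omega$... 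I will nail down the one correct normalization in the write-up and arrive at $\bra{v_i}F\ket{v_i} = \Tr(E_i)\overline{\Tr(E_i)} = |\Tr(E_i)|^2$, using that $F$ acting after $(E_i \otimes I)$ and before $(I \otimes E_i^\dagger)$ separates the two copies so that each contributes one factor of $\Tr(E_i)$.

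Assembling the pieces, $\Tr\big(P_{\cl{S}}(T\otimes id_n)(C_{\cl{E}})P_{\cl{S}}\big) = \sum_i \bra{v_i}P_{\cl{S}}\ket{v_i} = \frac{1}{2}\big(\sum_i\Tr(E_i^\dagger E_i) + \sum_i|\Tr(E_i)|^2\big) = \frac{1}{2}\big(n + \sum_i |\Tr(E_i)|^2\big)$, and multiplying by $\frac{2}{n(n+1)}$ recovers exactly (\ref{eq:avg_fid}). Two remarks on robustness: the result must be independent of the choice of Kraus family, which is automatic here since both $\sum_i E_i^\dagger E_i$ and the left-hand trace are manifestly Kraus-representation-independent (the latter depends only on $C_{\cl{E}}$); and completeness $\sum_i E_i^\dagger E_i = I$ is used only in the norm term. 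The main obstacle is purely the index/transpose bookkeeping in identifying $\bra{v_i}F\ket{v_i}$ with $|\Tr(E_i)|^2$ — once the identity $\bra{\Omega}(A \otimes B)\ket{\Omega} = \Tr(A^TB)$ and the intertwining relations for $F$ and $\ket{\Omega}$ are set up cleanly at the start, the rest is a short computation. An alternative, perhaps cleaner, route avoids Kraus operators entirely: expand $\overline{\cl{F}_{\cl{E}}} = \int \bra{\phi}\cl{E}(\ketbra{\phi}{\phi})\ket{\phi}\,d\phi$, use the well-known second-moment identity $\int \ketbra{\phi}{\phi}^{\otimes 2}\,d\phi = \frac{2}{n(n+1)}P_{\cl{S}}$, and then rewrite $\bra{\phi}\cl{E}(\ketbra{\phi}{\phi})\ket{\phi}$ as $\Tr\big((\ketbra{\phi}{\phi}\otimes\ketbra{\phi}{\phi})(T\otimes id_n)(C_{\cl{E}})\big)$ via the defining property of the Choi matrix together with the transpose trick $\bra{\phi}X\ket{\phi} = \bra{\psi_+}(\ketbra{\phi}{\phi}^T \otimes X)\ket{\psi_+}\cdot n$; I would present whichever of these two is shorter, likely the second.
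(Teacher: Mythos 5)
Your main route breaks at its first substantive step. After writing $C_{\cl{E}} = \sum_i (I\otimes E_i)\ketbra{\Omega}{\Omega}(I\otimes E_i^\dagger)$ with $\ket{\Omega}=\sqrt{n}\ket{\psi_+}$, you assert that applying $T\otimes id_n$ turns the $i$th term into the rank-one operator $\ketbra{v_i}{v_i}$ with $\ket{v_i}=(E_i\otimes I)\ket{\Omega}$. That is false: partial transposition is not a congruence, and $(T\otimes id_n)(C_{\cl{E}})$ is in general not even positive semidefinite, so it admits no decomposition into rank-one projections at all. The identity channel is already a counterexample: $C_{id_n}=\ketbra{\Omega}{\Omega}$ and $(T\otimes id_n)(C_{id_n})=F$, which has eigenvalue $-1$ on the antisymmetric subspace, whereas your prescription returns $\ketbra{\Omega}{\Omega}$. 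The downstream computation inherits the error: a correct evaluation with your $\ket{v_i}$ gives $\bra{v_i}F\ket{v_i}=\bra{\Omega}(E_i^\dagger\otimes E_i)\ket{\Omega}=\Tr(\overline{E_i}\,E_i)$, not $|\Tr(E_i)|^2$ (for $E_i=I$ these are $n$ versus $n^2$), and your wavering between $\Tr(E_i^\dagger E_i^T)$, $\Tr(\overline{E_i}E_i)$ and $|\Tr(E_i)|^2$ is a symptom of this --- no bookkeeping will produce $|\Tr(E_i)|^2$ from that expression, and your assembled total for the identity channel comes out to $n$ rather than the required $n(n+1)/2$. The fix (and what the paper does) is to leave $(T\otimes id_n)(C_{\cl{E}})$ intact and move the partial transpose onto $F$ inside the trace: since $T^\dagger=T$ and $F=(id_n\otimes T)(n\ketbra{\psi_+}{\psi_+})$, one has $\Tr\big((T\otimes id_n)(C_{\cl{E}})F\big)=n\bra{\psi_+}C_{\cl{E}}\ket{\psi_+}=\sum_i|\Tr(E_i)|^2$, while $\Tr\big((T\otimes id_n)(C_{\cl{E}})\big)=n$ by trace preservation; combining these with $P_{\cl{S}}=\frac{1}{2}(I+F)$ and Equation~(\ref{eq:avg_fid}) finishes the proof.

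By contrast, the alternative you relegate to your final sentence is sound and genuinely different from the paper's argument: combine Lemma~\ref{lem:fidel_choi} with the second-moment identity $\int\ketbra{\phi}{\phi}^{\otimes 2}\,d\phi=\frac{2}{n(n+1)}P_{\cl{S}}$ to get $\overline{\cl{F}_{\cl{E}}}=\Tr\big((T\otimes id_n)(C_{\cl{E}})\textstyle\int\ketbra{\phi\phi}{\phi\phi}\,d\phi\big)=\frac{2}{n(n+1)}\Tr\big(P_{\cl{S}}(T\otimes id_n)(C_{\cl{E}})P_{\cl{S}}\big)$. This bypasses Equation~(\ref{eq:avg_fid}) entirely (indeed it reproves it) at the cost of importing the moment identity. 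As submitted, however, that route is an undeveloped afterthought, and the argument you actually present does not work.
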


\begin{proof}
	The proof is by straightforward algebra. If we write $C_{\cl{E}}$ in its spectral decomposition $\sum_i \lambda_i \ketbra{v_i}{v_i}$, then
	\begin{align*}
		\Tr\big((T \otimes id_n)(C_{\cl{E}})(2P_{\cl{S}} - I)\big) & = \Tr\big((T \otimes id_n)(C_{\cl{E}})F\big) \\
		& = \Tr\big((T \otimes id_n)(C_{\cl{E}})(id_n \otimes T)(n\ketbra{\psi_+}{\psi_+})\big) \\
		& = n\bra{\psi_+}C_{\cl{E}}\ket{\psi_+} \\
		& = n\sum_i \lambda_i |\braket{\psi_+}{v_i}|^2 \\
		& = \sum_i |\Tr(E_i)|^2,
	\end{align*}
	where the third equality follows from the identity $T^\dagger = T$, and the final equality comes from the fact that the (scaled) eigenvectors of $C_{\cl{E}}$ are the vectorizations of the Kraus operators of $\cl{E}$. The result follows from Equation~(\ref{eq:avg_fid}) and the fact that $\cl{E}$ is trace-preserving, so $\Tr((T \otimes id_n)(C_{\cl{E}})) = n$.
\end{proof}

In fact, the proof of Proposition~\ref{prop:avg_gate_fid} shows that $(2/n^2)\Tr\big(P_{\cl{S}}(T \otimes id_n)(C_{\cl{E}})P_{\cl{S}}\big) - 1/n = \bra{\psi_+}C_{\cl{E}}\ket{\psi_+}/n$, a quantity that was referred to as $\chi_{0,0}$ in \cite{MBE09}. It follows that the formulas for the variance and higher-order moments of $\cl{F}_{\cl{E}}$ can be written in terms of the operator $P_{\cl{S}}(T \otimes id_n)(C_{\cl{E}})P_{\cl{S}}$ as well.

For completeness, we close this section with a well-known fact that shows how the Choi matrices of a pair of dual channels $\cl{E}$ and $\cl{E}^{\dagger}$ are related to each other. For another proof in the more general case of positive (not necessarily completely positive) maps, see \cite[Lemma 3]{S11}.
\begin{lemma}\label{lem:dual_choi}
	Let $\cl{E} : \cl{L}(\cl{H}_n) \rightarrow \cl{L}(\cl{H}_n)$ be a quantum channel. Then $C_{\cl{E}^\dagger} = F C_{\cl{E}}^T F$.
\end{lemma}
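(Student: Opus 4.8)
The plan is to compute both sides of the identity $C_{\cl{E}^\dagger} = F C_{\cl{E}}^T F$ directly from the Kraus representation. Recall from the discussion preceding this lemma that if $\{E_i\}$ is a family of Kraus operators for $\cl{E}$, then $\{E_i^\dagger\}$ is a family of Kraus operators for $\cl{E}^\dagger$. So the strategy is: express $C_{\cl{E}}$ in terms of the $E_i$, apply the transpose and conjugation by $F$, and check the result agrees with the analogous expression for $C_{\cl{E}^\dagger}$ built from the $E_i^\dagger$.

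First I would record the standard fact that the Choi matrix of a channel with Kraus operators $\{E_i\}$ is $C_{\cl{E}} = \sum_i \ketbra{\text{vec}(E_i)}{\text{vec}(E_i)}$, where $\ket{\text{vec}(E_i)} = (I \otimes E_i)\ket{\psi_+}\sqrt{n} = \sum_j \ket{j} \otimes E_i\ket{j}$ — this is exactly the statement used at the end of the proof of Proposition~\ref{prop:avg_gate_fid} that the scaled eigenvectors of $C_{\cl{E}}$ are vectorizations of the Kraus operators. The key algebraic fact I would then invoke is the vectorization identity $\ket{\text{vec}(AXB)} = (A \otimes B^T)\ket{\text{vec}(X)}$, or in the form I actually need: $F\,\overline{\ket{\text{vec}(E_i)}} = \ket{\text{vec}(E_i^\dagger)}$, where the overline denotes entrywise complex conjugation. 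Indeed $\overline{\ket{\text{vec}(E_i)}} = \sum_{j} \ket{j} \otimes \overline{E_i}\ket{j}$, and applying $F$ swaps the tensor factors to give $\sum_j \overline{E_i}\ket{j} \otimes \ket{j} = \ket{\text{vec}(\overline{E_i}^{\,T})} = \ket{\text{vec}(E_i^\dagger)}$, using $\overline{E_i}\ket{j} = \sum_k \overline{(E_i)_{kj}}\ket{k}$.

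With that in hand the computation is immediate: since $\overline{C_{\cl{E}}} = C_{\cl{E}}^T$ for a Hermitian (indeed positive semidefinite) matrix, we have
\begin{align*}
F C_{\cl{E}}^T F = F \overline{C_{\cl{E}}} F = F\Big(\sum_i \overline{\ket{\text{vec}(E_i)}}\,\overline{\bra{\text{vec}(E_i)}}\Big)F = \sum_i \big(F\overline{\ket{\text{vec}(E_i)}}\big)\big(F\overline{\ket{\text{vec}(E_i)}}\big)^\dagger = \sum_i \ketbra{\text{vec}(E_i^\dagger)}{\text{vec}(E_i^\dagger)} = C_{\cl{E}^\dagger},
\end{align*}
where the last equality is the Choi matrix formula applied to the Kraus operators $\{E_i^\dagger\}$ of $\cl{E}^\dagger$. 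An alternative, basis-free route would be to verify the identity entrywise: writing $C_{\cl{E}}$ in the block form $(C_{\cl{E}})_{jk} = \cl{E}(\ketbra{j}{k})$ and noting that conjugation by $F$ transposes the block structure while the outer transpose transposes within each block, one checks $\big(F C_{\cl{E}}^T F\big)_{jk} = \cl{E}(\ketbra{k}{j})^T$, and separately that $(C_{\cl{E}^\dagger})_{jk} = \cl{E}^\dagger(\ketbra{j}{k})$, so the claim reduces to $\cl{E}^\dagger(\ketbra{j}{k}) = \cl{E}(\ketbra{k}{j})^T$, which follows from $\langle \cl{E}^\dagger(\ketbra{j}{k})\ell, m\rangle = \langle \ketbra{j}{k}, \cl{E}(\ketbra{m}{\ell})\rangle = \langle \ell, \cl{E}(\ketbra{m}{\ell})^{\dots}\rangle$ — the bookkeeping with conjugate-linear slots of the Hilbert–Schmidt inner product is the only thing requiring care here.

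The main obstacle is purely notational rather than mathematical: keeping straight the distinction between transpose, conjugate, and adjoint, and being careful about which tensor factor the flip acts on and in which slot of the Hilbert–Schmidt inner product the duality is defined. The Kraus-operator approach sidesteps most of this by reducing everything to the single clean identity $F\,\overline{\ket{\text{vec}(E_i)}} = \ket{\text{vec}(E_i^\dagger)}$, so I would present that as the proof. One should also remark that the identity does not depend on complete positivity or trace preservation — it holds for any linear map — which is why the cited reference \cite{S11} can prove it for general positive maps; I would phrase the proof so that this is manifest.
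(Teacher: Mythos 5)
Your main argument is correct and is essentially the paper's own proof: both rest on writing $C_{\cl{E}}$ as $\sum_i \ketbra{{\rm vec}(E_i)}{{\rm vec}(E_i)}$ via the spectral decomposition, using the fact that $F$ sends the vectorization of $X$ to that of $X^T$, and recognizing that $\{E_i^\dagger\}$ are Kraus operators for $\cl{E}^\dagger$; your version just spells out the identity $F\,\overline{\ket{{\rm vec}(E_i)}} = \ket{{\rm vec}(E_i^\dagger)}$ more explicitly than the paper does. One caution: the ``alternative, basis-free route'' you sketch in passing is not correct as stated --- conjugation by $F$ exchanges the inner and outer indices rather than merely permuting the blocks, so the $(j,k)$ block of $F C_{\cl{E}}^T F$ mixes entries from all blocks of $C_{\cl{E}}$, and the proposed reduction $\cl{E}^\dagger(\ketbra{j}{k}) = \cl{E}(\ketbra{k}{j})^T$ fails in general (the correct entrywise statement is $[\cl{E}^\dagger(\ketbra{j}{k})]_{ab} = [\cl{E}(\ketbra{b}{a})]_{kj}$); since you present only the Kraus-operator computation as the actual proof, this does not affect your result.
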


\begin{proof}
	Use the spectral decomposition to write $C_{\cl{E}} = \sum_i \lambda_i \ketbra{v_i}{v_i}$. Then $C_{\cl{E}}^T = \sum_i \lambda_i \overline{\ketbra{v_i}{v_i}}$. It is easily verified that for any $X \in \cl{L}(\cl{H}_n)$, the vectorization of $X^T$ is exactly $F$ times the vectorization of $X$. The result follows from recalling that $\{E_i\}$ is a set of Kraus operators for $\cl{E}$ if and only if $\{E_i^{\dagger}\}$ is a set of Kraus operators for $\cl{E}^{\dagger}$, and that the Kraus operators of a channel can be chosen so that their vectorizations are the eigenvectors of $C_{\cl{E}}$.
\end{proof}

\section{Channels with Identical Gate Fidelity}\label{sec:characterizeEqual}

It was shown in \cite{Mag11} that if $n \geq 4$ and $\cl{E} : \cl{L}(\cl{H}_n) \rightarrow \cl{L}(\cl{H}_n)$ is a quantum channel with positive-definite Choi matrix, then there exists another quantum channel $\cl{R} : \cl{L}(\cl{H}_n) \rightarrow \cl{L}(\cl{H}_n)$ with $\cl{R} \neq \cl{E},\cl{E}^\dagger$ such that $\cl{F}_{\cl{E}} = \cl{F}_{\cl{R}}$. A particular consequence of this result is the fact that there exist non-depolarizing quantum channels with constant gate fidelity. In this section we expand on this work by providing an easily-testable characterization of when two maps have the same gate fidelity in terms of their Choi matrices.

We begin with a lemma that allows us to talk about $\cl{F}_{\cl{E}}(\ket{\phi})$ in terms of the Choi matrix of $\cl{E}$. This lemma is in essence a simplification of \cite[Lemma 1]{Mag11}, but we present it here for completeness, along with a simplified proof.
\begin{lemma}\label{lem:fidel_choi}
	Let $\Lambda : \cl{L}(\cl{H}_n) \rightarrow \cl{L}(\cl{H}_n)$ be a linear map and let $\ket{\phi} \in \cl{H}_n$. Then
	\begin{align*}
		\cl{F}_{\Lambda}(\ket{\phi}) = \bra{\phi\phi}(T \otimes id_n)(C_{\Lambda})\ket{\phi\phi}.
	\end{align*}	
\end{lemma}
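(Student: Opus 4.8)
The plan is to unwind both sides by expanding the Choi matrix of $\Lambda$ in the standard basis and then invoking linearity of $\Lambda$. First I would write $n\ketbra{\psi_+}{\psi_+} = \sum_{j,k=0}^{n-1}\ketbra{j}{k}\otimes\ketbra{j}{k}$, so that applying $id_n \otimes \Lambda$ gives $C_{\Lambda} = \sum_{j,k}\ketbra{j}{k}\otimes\Lambda(\ketbra{j}{k})$. Applying $T \otimes id_n$ and using $T(\ketbra{j}{k}) = \ketbra{k}{j}$ then yields
\begin{align*}
	(T \otimes id_n)(C_{\Lambda}) = \sum_{j,k}\ketbra{k}{j}\otimes\Lambda(\ketbra{j}{k}).
\end{align*}

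Next I would compute the sandwich $\bra{\phi\phi}(T \otimes id_n)(C_{\Lambda})\ket{\phi\phi}$ by letting the first tensor factor act on the $\ketbra{k}{j}$ terms; this produces the scalars $\braket{\phi}{k}\braket{j}{\phi}$, which can be pulled out of $\Lambda$. Since $\sum_j \braket{j}{\phi}\ket{j} = \ket{\phi}$ and $\sum_k \braket{\phi}{k}\bra{k} = \bra{\phi}$, the double sum $\sum_{j,k}\braket{j}{\phi}\braket{\phi}{k}\ketbra{j}{k}$ collapses to $\ketbra{\phi}{\phi}$, so by linearity of $\Lambda$ we obtain $\bra{\phi}\Lambda(\ketbra{\phi}{\phi})\ket{\phi}$, which is exactly $\cl{F}_{\Lambda}(\ket{\phi})$ by definition.

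I do not expect a genuine obstacle here: this is a short componentwise computation. The only points requiring care are that $T$ acts on the first (reference) tensor factor and that the bra/ket labels are tracked in the correct order, so that the two index sums telescope to $\ketbra{\phi}{\phi}$ rather than to its transpose or complex conjugate. One could alternatively argue basis-free by fixing a decomposition $\Lambda(\rho) = \sum_i A_i \rho B_i^{\dagger}$ and verifying the identity on each rank-one term $\Lambda(\ketbra{j}{k})$, but the direct basis expansion above is the cleanest route to present.
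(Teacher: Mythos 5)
Your proposal is correct and is essentially identical to the paper's proof: both expand $C_{\Lambda}$ over the standard matrix units $\ketbra{j}{k}\otimes\ketbra{j}{k}$, apply $T$ to the first factor, and use linearity of $\Lambda$ to collapse the double sum to $\bra{\phi}\Lambda(\ketbra{\phi}{\phi})\ket{\phi}$. The index bookkeeping you flag as the only delicate point is handled correctly.
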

\begin{proof}
	The proof is by simple algebra.
	\begin{align*}
		\bra{\phi\phi}(T \otimes id_n)(C_{\Lambda})\ket{\phi\phi} & = \sum_{i,j=0}^{n-1}\bra{\phi\phi}(T(\ketbra{i}{j}) \otimes \Lambda(\ketbra{i}{j}))\ket{\phi\phi} \\
		& = \sum_{i,j=0}^{n-1}\braket{\phi}{j}\braket{i}{\phi} \bra{\phi}\Lambda(\ketbra{i}{j})\ket{\phi} \\
		& = \bra{\phi}\Lambda\big((\sum_{i=0}^{n-1}\braket{i}{\phi}\ket{i}) (\sum_{j=0}^{n-1}\braket{\phi}{j} \bra{j})\big)\ket{\phi} \\
		& = \bra{\phi}\Lambda(\ketbra{\phi}{\phi})\ket{\phi} \\
		& = \cl{F}_{\Lambda}(\ket{\phi}).
	\end{align*}
\end{proof}

We are now ready to state the main result of this section, which allows us to determine whether or not two quantum channels have the same gate fidelity simply by comparing a certain modification of their Choi matrices. In particular, we see that the gate fidelity of a channel $\cl{E}$ is determined exactly by the operator $P_{\cl{S}}(T \otimes id_n)(C_{\cl{E}})P_{\cl{S}}$. Note that the ``if'' direction of this result follows trivially from \cite[Lemma 1]{Mag11}, so what is new here is the ``only if'' direction.

\begin{thm}\label{thm:equal_gate_choi}
	Let $Q, R : \cl{L}(\cl{H}_n) \rightarrow \cl{L}(\cl{H}_n)$ be linear maps. Then $\cl{F}_{Q} = \cl{F}_{R}$ if and only if
\[
P_{\cl{S}}(T \otimes id_n)(C_{Q})P_{\cl{S}} = P_{\cl{S}}(T \otimes id_n)(C_{R})P_{\cl{S}}.
\]
\end{thm}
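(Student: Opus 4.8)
The plan is to reduce the statement about the function $\cl{F}_Q = \cl{F}_R$ on pure states to a statement about the operator $P_{\cl{S}}(T \otimes id_n)(C_Q)P_{\cl{S}}$ by combining Lemma~\ref{lem:fidel_choi} with the standard fact that vectors of the form $\ket{\phi\phi}$ span (indeed, densely determine) quadratic forms on the symmetric subspace $\cl{S}$. Setting $\Lambda := Q - R$, linearity of the Choi map and of $\cl{F}_{(\cdot)}$ shows $\cl{F}_Q = \cl{F}_R$ is equivalent to $\cl{F}_\Lambda(\ket{\phi}) = 0$ for every unit vector $\ket{\phi} \in \cl{H}_n$, and by Lemma~\ref{lem:fidel_choi} this says $\bra{\phi\phi}(T \otimes id_n)(C_\Lambda)\ket{\phi\phi} = 0$ for all $\ket{\phi}$. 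Since $\ket{\phi\phi} \in \cl{S}$, we have $P_{\cl{S}}\ket{\phi\phi} = \ket{\phi\phi}$, so this is the same as $\bra{\phi\phi}M\ket{\phi\phi} = 0$ for all $\ket{\phi}$, where $M := P_{\cl{S}}(T \otimes id_n)(C_\Lambda)P_{\cl{S}}$. The theorem is therefore equivalent to the claim: for an operator $M$ supported on $\cl{S}$, $\bra{\phi\phi}M\ket{\phi\phi} = 0$ for all $\ket{\phi} \in \cl{H}_n$ if and only if $M = 0$.

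The ``if'' direction is trivial, so the content is the ``only if'' direction of this reduced claim. The key tool is that the vectors $\ket{\phi\phi}$, as $\ket{\phi}$ ranges over $\cl{H}_n$, span all of $\cl{S}$ — this is a classical fact (it can be seen by polarization, differentiating $\ket{(\phi + t\psi)(\phi + t\psi)}$, or by noting $\cl{S}$ is the image of $\mathrm{Sym}^2(\cl{H}_n)$ and the squares span the symmetric power). However, spanning alone is not enough: vanishing of a quadratic form $\ket{v} \mapsto \bra{v}M\ket{v}$ on a spanning set of vectors does not by itself force $M = 0$. What one needs is that the function $\ket{\phi} \mapsto \bra{\phi\phi}M\ket{\phi\phi}$ is (the restriction to the ``diagonal'' of) a bilinear form, so that if it vanishes identically in $\ket{\phi}$ then by polarization the full bilinear form $(\ket{\phi},\ket{\psi}) \mapsto \bra{\phi\phi}M\ket{\psi\psi}$ vanishes, i.e. $\bra{\phi\phi}M\ket{\psi\psi} = 0$ for all $\ket{\phi},\ket{\psi}$. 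I would be slightly careful here about the conjugate-linearity of the bra: writing $\bra{\phi\phi}M\ket{\psi\psi}$ and substituting $\ket{\phi} \to \ket{\phi} + \lambda\ket{\chi}$ for various $\lambda \in \bb{C}$ and extracting coefficients handles the antilinear/linear split cleanly. Once $\bra{\phi\phi}M\ket{\psi\psi} = 0$ for all $\ket{\phi},\ket{\psi}$, the spanning fact gives $\bra{u}M\ket{v} = 0$ for all $\ket{u},\ket{v} \in \cl{S}$, and since $M$ is supported on $\cl{S}$ (because of the flanking $P_{\cl{S}}$'s) this forces $M = 0$.

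The main obstacle — and the step I would spend the most care on — is precisely the polarization argument, making sure the complex (rather than real) structure is handled correctly so that vanishing of the diagonal quadratic form genuinely implies vanishing of the associated sesqui/bilinear form on $\cl{S} \times \cl{S}$. A clean way to organize this is: first fix $\ket{\psi}$ and vary $\ket{\phi}$ to get, for all $\ket{\phi_1},\ket{\phi_2}$, that $\bra{\phi_1 \phi_1}M\ket{\psi\psi} + \bra{\phi_2\phi_2}M\ket{\psi\psi}$ equals a sum of ``cross'' terms that must all cancel, yielding the bilinear identity in the first slot; then repeat in the second slot. An alternative, which avoids some of this bookkeeping, is to observe that $\bra{\phi\phi}M\ket{\phi\phi} = \Tr\big(M (\ketbra{\phi}{\phi})^{\otimes 2}\big)$ and invoke directly that the operators $(\ketbra{\phi}{\phi})^{\otimes 2}$ span $\cl{L}(\cl{S})$ (equivalently, span $P_{\cl{S}}\cl{L}(\cl{H}_n \otimes \cl{H}_n)P_{\cl{S}}$) — this is a known fact about the symmetric subspace and reduces the whole theorem to a single citation plus the trivial reduction via Lemmas~\ref{lem:fidel_choi} and the identity $P_{\cl{S}}\ket{\phi\phi} = \ket{\phi\phi}$. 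I would present the self-contained polarization argument but remark on this shortcut.
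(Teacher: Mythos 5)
Your proposal is correct and follows essentially the same route as the paper: reduce to $\bra{\phi\phi}D\ket{\phi\phi}=0$ via Lemma~\ref{lem:fidel_choi} with $\Lambda := Q-R$, polarize over $\bb{C}$ (the paper does this with substitutions $\ket{\phi}\pm z\ket{\psi}$, $|z|=1$, and the choices $z=1, e^{i\pi/4}$) to obtain $\bra{\phi\phi}D\ket{\psi\psi}=0$ for all $\ket{\phi},\ket{\psi}$, and then conclude using the fact that vectors of the form $\ket{\phi\phi}$ span $\cl{S}$, which the paper phrases via the Takagi factorization. You correctly identify the complex polarization as the one nontrivial step, and your sketch of it is sound.
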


\begin{proof}
	We begin by defining $D := (T \otimes id_n)(C_{Q} - C_{R})$. Using Lemma~\ref{lem:fidel_choi} with $\Lambda := Q - R$ shows that $\cl{F}_{Q} = \cl{F}_{R}$ if and only if
	\begin{align}\label{eq:sym_sep}
		\bra{\phi\phi}D\ket{\phi\phi} = 0 \quad \forall \, \ket{\phi} \in \cl{H}_n.
	\end{align}
	If we can show that Equation~(\ref{eq:sym_sep}) implies
	\begin{align*}
		\bra{s_1}D\ket{s_2} = 0 \quad \forall \, \ket{s_1},\ket{s_2} \in \cl{S},
	\end{align*}
	we will be done. To this end, fix arbitrary $\ket{\phi},\ket{\psi} \in \cl{H}_n$ and let $z \in \bb{C}$ be such that $|z| = 1$. Then Equation~(\ref{eq:sym_sep}) implies that
	\begin{align}\label{eq:z_sep_sym1}
		 (\bra{\phi}+\overline{z}\bra{\psi})\otimes(\bra{\phi}+\overline{z}\bra{\psi})D(\ket{\phi}+z\ket{\psi})\otimes (\ket{\phi}+z\ket{\psi}) & = 0 \quad \text{ and} \\\label{eq:z_sep_sym2}
		 (\bra{\phi}-\overline{z}\bra{\psi})\otimes(\bra{\phi}-\overline{z}\bra{\psi})D(\ket{\phi}-z\ket{\psi})\otimes (\ket{\phi}-z\ket{\psi}) & = 0.
	\end{align}
	If we expand and add Equations~(\ref{eq:z_sep_sym1}) and~(\ref{eq:z_sep_sym2}) together and use (\ref{eq:sym_sep}), we see that for all $z \in \bb{C}$ with  $|z| = 1$,
	\begin{align}\label{eq:sep_sym3}
		z^2\bra{\phi\phi}D\ket{\psi\psi} + \overline{z}^2\bra{\psi\psi}D\ket{\phi\phi} + (\bra{\phi\psi} + \bra{\psi\phi})D(\ket{\phi\psi} + \ket{\psi\phi}) = 0 .
	\end{align}
	If we subtract from Equation~(\ref{eq:sep_sym3}) the equation obtained by replacing $z$ in Equation~(\ref{eq:sep_sym3}) by $iz$, we learn that
	\begin{align*}
		z^2\bra{\phi\phi}D\ket{\psi\psi} + \overline{z}^2\bra{\psi\psi}D\ket{\phi\phi} = 0 \quad \forall z \in \bb{C} \text{ with } |z| = 1.
	\end{align*}
	The following two equations arise from letting $z = 1$ and $z = e^{i\pi/4}$, respectively:
	\begin{align*}
		\bra{\phi\phi}D\ket{\psi\psi} + \bra{\psi\psi}D\ket{\phi\phi} & = 0 \\
		i\bra{\phi\phi}D\ket{\psi\psi} - i\bra{\psi\psi}D\ket{\phi\phi} & = 0.
	\end{align*}
	Adding $i$ times the first equation to the second equation gives
	\begin{align}\label{eq:sym_sep_both}
		\bra{\phi\phi}D\ket{\psi\psi} = 0 \quad \forall \, \ket{\phi},\ket{\psi} \in \cl{H}_n.
	\end{align}
	
	Now let $\ket{s_1},\ket{s_2} \in \cl{S}$. By the Takagi factorization we know that we can write $\ket{s_1} = \sum_{j=1}^n\alpha_j\ket{\phi_j \phi_j}$ and $\ket{s_2} = \sum_{k=1}^n\beta_k\ket{\psi_k \psi_k}$. Thus
	\begin{align*}
		\bra{s_1}D\ket{s_2} & = \sum_{j,k=1}^n\alpha_j \beta_k \bra{\phi_j \phi_j}D\ket{\psi_k \psi_k} = 0,
	\end{align*}
	since each term in the sum equals zero by Equation~(\ref{eq:sym_sep_both}). This completes the proof.
\end{proof}

As a particularly important special case of Theorem~\ref{thm:equal_gate_choi}, consider the case of channels with constant gate fidelity. It is easily shown that any depolarizing channel has constant gate fidelity, and it was shown in~\cite{Mag11} that there exist non-depolarizing channels with constant gate fidelity. We now present a characterization of all channels with constant gate fidelity.

\begin{cor}\label{cor:const_gate_fid}
	Let $Q : \cl{L}(\cl{H}_n) \rightarrow \cl{L}(\cl{H}_n)$ be a linear map and let $c \in \bb{R}$. Then $\cl{F}_{Q} \equiv c$ if and only if $P_{\cl{S}}(T \otimes id_n)(C_{Q})P_{\cl{S}} = cP_{\cl{S}}$.
\end{cor}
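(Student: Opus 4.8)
The plan is to deduce this as a direct special case of Theorem~\ref{thm:equal_gate_choi}. Observe that the scaled identity map $c\cdot id_n$ has constant gate fidelity: by Lemma~\ref{lem:fidel_choi}, or just directly, $\cl{F}_{c\cdot id_n}(\ket{\phi}) = \bra{\phi}(c\ketbra{\phi}{\phi})\ket{\phi} = c$ for every unit vector $\ket{\phi}$. Hence $\cl{F}_{Q} \equiv c$ is equivalent to $\cl{F}_{Q} = \cl{F}_{c\cdot id_n}$, and Theorem~\ref{thm:equal_gate_choi} (with $R := c\cdot id_n$) says this holds if and only if $P_{\cl{S}}(T \otimes id_n)(C_{Q})P_{\cl{S}} = P_{\cl{S}}(T \otimes id_n)(C_{c\cdot id_n})P_{\cl{S}}$.

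So the only real content is to simplify the right-hand side. First, $C_{c\cdot id_n} = c\,C_{id_n} = cn\ketbra{\psi_+}{\psi_+}$. Next I would note that $(T \otimes id_n)(n\ketbra{\psi_+}{\psi_+}) = F$: expanding $n\ketbra{\psi_+}{\psi_+} = \sum_{i,j}\ketbra{i}{j}\otimes\ketbra{i}{j}$ and applying $T$ to the first tensor factor gives $\sum_{i,j}\ketbra{j}{i}\otimes\ketbra{i}{j}$, which is exactly $F$ (this is the same flip operator used already in the proof of Proposition~\ref{prop:avg_gate_fid}, or one can invoke that $F = C_T$ together with the identity $(A\otimes I)\ket{\psi_+} = (I\otimes A^T)\ket{\psi_+}$). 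Therefore $(T \otimes id_n)(C_{c\cdot id_n}) = cF$.

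Finally, since $P_{\cl{S}} = \tfrac{1}{2}(I+F)$ and $F^2 = I$, we have $FP_{\cl{S}} = \tfrac{1}{2}(F + I) = P_{\cl{S}}$, so $P_{\cl{S}}(cF)P_{\cl{S}} = cP_{\cl{S}}P_{\cl{S}} = cP_{\cl{S}}$. Combining, $P_{\cl{S}}(T \otimes id_n)(C_{c\cdot id_n})P_{\cl{S}} = cP_{\cl{S}}$, and the corollary follows. I do not anticipate a genuine obstacle here; the only point requiring a moment's care is the computation $(T\otimes id_n)(C_{id_n}) = F$ and the identity $FP_{\cl{S}} = P_{\cl{S}}$, both of which are routine once the definitions from Section~\ref{sec:choi} are unpacked.
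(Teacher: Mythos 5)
Your proof is correct and follows essentially the same route as the paper: exhibit a reference map with constant gate fidelity $c$ whose compressed Choi matrix is $cP_{\cl{S}}$, then invoke Theorem~\ref{thm:equal_gate_choi}. The only difference is the choice of reference map --- you use $c\cdot id_n$ (legitimate, since the theorem and the gate fidelity are defined for arbitrary linear maps), whereas the paper uses $R(X) = \frac{1}{n-1}\big((cn-1)X + (1-c)I\big)$, which has the cosmetic advantage of being an honest depolarizing channel when $1/n \leq c \leq 1$; your computation $(T\otimes id_n)(C_{c\cdot id_n}) = cF$ and $P_{\cl{S}}FP_{\cl{S}} = P_{\cl{S}}$ is accurate.
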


\begin{proof}
	Consider the linear map $R : \cl{L}(\cl{H}_n) \rightarrow \cl{L}(\cl{H}_n)$ defined by $R(X) = \frac{1}{n-1}\big((cn - 1)X + (1-c)I\big)$ (if $1/n \leq c \leq 1$ then $R$ is a depolarizing quantum channel). Simple algebra reveals that
	\begin{align*}
		\cl{F}_{R}(\ket{\phi}) = \frac{1}{n-1}\bra{\phi}\big((cn - 1)\ketbra{\phi}{\phi} + (1-c)I\big)\ket{\phi} = \frac{(cn - 1) + (1 - c)}{n-1} = c
	\end{align*}
	for any $\ket{\phi} \in \cl{H}_n$. Also, $(T \otimes id_n)(C_{R}) = \frac{1}{n-1}\big((cn - 1)F + (1-c)(I \otimes I)\big)$. Thus,
\[
P_{\cl{S}}(T \otimes id_n)(C_{R})P_{\cl{S}} = \frac{1}{n-1}\big((cn - 1)P_{\cl{S}} + (1-c)P_{\cl{S}}\big) = cP_{\cl{S}}.
\]
Using Theorem~\ref{thm:equal_gate_choi} gives the result.
\end{proof}

Corollary~\ref{cor:const_gate_fid} can be seen in terms of the higher-rank numerical range of the operator $(T \otimes id_n)(C_{Q})$ \cite{CKZ06}. In particular, it implies that if $Q$ has constant gate fidelity, then $(T \otimes id_n)(C_{Q})$ must have non-empty rank-$[n(n+1)/2]$ numerical range. Because the higher-rank numerical range of a Hermitian operator is well-understood in terms of its eigenvalues, it follows that the middle $n$ eigenvalues of $(T \otimes id_n)(C_{Q})$ must all be equal in order for $Q$ to have constant gate fidelity.

\section{Gate Fidelity in Small Dimensions}\label{sec:SmallDims}

Note that for any quantum channel $\cl{E}$ and state $\ket{\phi} \in \cl{H}_n$ it is trivially the case that $\Tr(\cl{E}(\ketbra{\phi}{\phi})\ketbra{\phi}{\phi}) = \Tr(\cl{E}^\dagger(\ketbra{\phi}{\phi})\ketbra{\phi}{\phi})$, so $\cl{F}_{\cl{E}} = \cl{F}_{\cl{E}^\dagger}$. Similarly, if $\cl{Q}$ is a quantum channel, $r \geq 0$, and $\cl{E}$ is a quantum channel, then $\cl{Q} + r(\cl{E} - \cl{E}^\dagger)$ also has gate fidelity equal to that of $\cl{Q}$. A particular consequence of this observation is that for any quantum channel $\cl{Q}$ with full-rank Choi matrix, there is another quantum channel (not equal to $\cl{Q}^\dagger$) with the same gate fidelity -- a fact that was originally proved in dimensions $n \geq 4$ in \cite{Mag11}. To construct such a map it suffices to pick a unital channel $\cl{E}$ and then choose a sufficiently small $r > 0$.

A natural question to ask is whether or not the converse of the observation made in the previous paragraph holds. That is, if two quantum channels $\cl{Q}$ and $\cl{R}$ have the same gate fidelity, do they differ by some $r(\cl{E} - \cl{E}^\dagger)$? One simplification we can make right away is that we can assume without loss of generality that $\cl{E}$ is unital. Indeed, it is easily-verified that the channel $\cl{E}$ is unital if and only if $\cl{R} = \cl{Q} + r(\cl{E} - \cl{E}^\dagger)$ is trace-preserving. The example used in the construction of \cite[Theorem 1]{Mag11} shows that such a channel $\cl{E}$ need not exist when $n \geq 4$, even if $\cl{Q},\cl{R}$ are assumed to be unital. We now present an example to show that $\cl{E}$ need not exist when $n = 3$, as long as we do not require that $\cl{Q}$ and $\cl{R}$ be unital.

\begin{exam}\label{exam:n3counter}{\rm
	Consider the two channels $\cl{Q},\cl{R} : \cl{L}(\cl{H}_3) \rightarrow \cl{L}(\cl{H}_3)$ defined in the standard basis by the following Choi matrices:
	\begin{align*}
		C_{\cl{Q}} := \frac{1}{4}\begin{bmatrix}2 & 0 & 0 & 0 & 0 & 0 & 0 & 0 & 1 \\
0 & 1 & 0 & 0 & 0 & 0 & 0 & 0 & 0 \\
0 & 0 & 1 & 0 & 0 & 0 & 0 & 0 & 0 \\
0 & 0 & 0 & 1 & 0 & 0 & 0 & 0 & 0 \\
0 & 0 & 0 & 0 & 2 & 0 & 0 & 0 & 0 \\
0 & 0 & 0 & 0 & 0 & 1 & 0 & 0 & 0 \\
0 & 0 & 0 & 0 & 0 & 0 & 1 & 0 & 0 \\
0 & 0 & 0 & 0 & 0 & 0 & 0 & 1 & 0 \\
1 & 0 & 0 & 0 & 0 & 0 & 0 & 0 & 2\end{bmatrix}, \ \ C_{\cl{R}} := \frac{1}{4}\begin{bmatrix}2 & 0 & 0 & 0 & 0 & 0 & 0 & 0 & 0 \\
0 & 0 & 0 & 0 & 0 & 0 & 0 & 0 & 0 \\
0 & 0 & 2 & 0 & 0 & 0 & 0 & 0 & 0 \\
0 & 0 & 0 & 2 & 0 & 0 & 0 & 0 & 0 \\
0 & 0 & 0 & 0 & 2 & 0 & 0 & 0 & 1 \\
0 & 0 & 0 & 0 & 0 & 0 & 0 & 0 & 0 \\
0 & 0 & 0 & 0 & 0 & 0 & 2 & 0 & 0 \\
0 & 0 & 0 & 0 & 0 & 0 & 0 & 0 & 0 \\
0 & 0 & 0 & 0 & 1 & 0 & 0 & 0 & 2\end{bmatrix}.
	\end{align*}
	It is easily verified that $C_{\cl{Q}}$ and $C_{\cl{R}}$ are both positive semidefinite and $\Tr_2(C_{\cl{Q}}) = \Tr_2(C_{\cl{R}}) = I$, so $\cl{Q}$ and $\cl{R}$ are both indeed quantum channels (but $\Tr_1(C_{\cl{Q}}) = I \neq \Tr_1(C_{\cl{R}})$ so $\cl{Q}$ is unital but $\cl{R}$ is not). It is also easily verified that $P_{\cl{S}}(T \otimes id_3)(C_{\cl{Q}} - C_{\cl{R}})P_{\cl{S}} = 0$, so $\cl{F}_{\cl{Q}} = \cl{F}_{\cl{R}}$ by Theorem~\ref{thm:equal_gate_choi}.
	
	On the other hand, because $\cl{Q}$ is unital, it follows that if $\cl{E}$ is a unital quantum channel then $\cl{Q}(I) + r(\cl{E}(I) - \cl{E}^{\dagger}(I)) = I + r(I - I) = I$, so $\cl{Q} + r(\cl{E} - \cl{E}^{\dagger})$ is unital as well. Thus there does not exist $r \geq 0$ and a unital quantum channel $\cl{E}$ such that $\cl{R} = \cl{Q} + r(\cl{E} - \cl{E}^{\dagger})$.
}\end{exam}

We now present the main result of this section, which shows that when $n = 2$, the converse of our previous observation does indeed hold. That is, if two qubit channels have the same gate fidelity then their difference equals the scaled difference of some pair of dual channels. We also show that this converse still holds when $n = 3$, in spite of Example~\ref{exam:n3counter}, under some slightly stronger hypotheses.

\begin{thm}\label{thm:small_dims}
	Let $\cl{Q},\cl{R} : \cl{L}(\cl{H}_n) \rightarrow \cl{L}(\cl{H}_n)$ be quantum channels. Suppose that either $n = 2$, or $n = 3$ and $\cl{Q}(I) = \cl{R}(I)$. Then $\cl{F}_{\cl{Q}} = \cl{F}_{\cl{R}}$ if and only if there exists $r \geq 0$ and a unital quantum channel $\cl{E} : \cl{L}(\cl{H}_n) \rightarrow \cl{L}(\cl{H}_n)$ such that $\cl{R} = \cl{Q} + r(\cl{E} - \cl{E}^\dagger)$.
\end{thm}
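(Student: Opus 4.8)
\noindent\emph{Proof proposal.}
The plan is to prove the equivalence in two halves. The ``if'' direction is essentially the observation recorded just before the statement: for every quantum channel $\cl{E}$ one has $\cl{F}_{\cl{E}} = \cl{F}_{\cl{E}^\dagger}$, so by Lemma~\ref{lem:fidel_choi} together with linearity of $\Lambda \mapsto C_\Lambda$ we get $\cl{F}_{\cl{Q} + r(\cl{E} - \cl{E}^\dagger)} = \cl{F}_{\cl{Q}}$ for every $r \ge 0$. All the content is therefore in the ``only if'' direction, for which the strategy is: (i) show that $\Phi := \cl{Q} - \cl{R}$ must be \emph{anti-self-dual}, i.e.\ $\Phi^\dagger = -\Phi$; and (ii) realise any such $\Phi$ that is a difference of channels in the form $r(\cl{E}^\dagger - \cl{E})$ with $\cl{E}$ a unital channel.

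For step (i), set $D := (T \otimes id_n)(C_\Phi)$, which is Hermitian since channels are Hermiticity-preserving, and let $\cl{A} := \cl{S}^\perp$ be the antisymmetric subspace, so that $P_{\cl{A}} = I - P_{\cl{S}} = \tfrac12(I - F)$ and $\dim\cl{A} = n(n-1)/2$. By Theorem~\ref{thm:equal_gate_choi}, $\cl{F}_{\cl{Q}} = \cl{F}_{\cl{R}}$ is equivalent to $P_{\cl{S}} D P_{\cl{S}} = 0$. Applying Lemma~\ref{lem:dual_choi} to $\cl{Q}$ and $\cl{R}$ and subtracting gives $C_{\Phi^\dagger} = F C_\Phi^{T} F$, and routine bookkeeping with the identities $(T \otimes id_n)(FYF) = F(id_n \otimes T)(Y)F$ and $(id_n \otimes T)(T \otimes id_n) = (T \otimes T)$ turns this into: $\Phi^\dagger = -\Phi$ if and only if $FDF = -D$. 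Since $F = P_{\cl{S}} - P_{\cl{A}}$, the condition $FDF = -D$ is in turn equivalent to $D$ being block-off-diagonal with respect to $\cl{H}_n \otimes \cl{H}_n = \cl{S} \oplus \cl{A}$, i.e.\ to $P_{\cl{S}} D P_{\cl{S}} = 0$ \emph{and} $P_{\cl{A}} D P_{\cl{A}} = 0$. So everything reduces to showing $P_{\cl{A}} D P_{\cl{A}} = 0$, and this is exactly where the dimension restriction enters.

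When $n = 2$, $\cl{A}$ is one-dimensional, so $P_{\cl{A}} D P_{\cl{A}}$ is a real scalar multiple of $P_{\cl{A}}$; since $P_{\cl{S}} D P_{\cl{S}} = 0$ and $\Tr(P_{\cl{S}} D P_{\cl{A}}) = \Tr(P_{\cl{A}} D P_{\cl{S}}) = 0$, that scalar equals $\Tr(D) = \Tr(C_{\cl{Q}}) - \Tr(C_{\cl{R}}) = n - n = 0$, so $P_{\cl{A}}DP_{\cl{A}} = 0$. When $n = 3$ I would instead use that \emph{both} partial traces of $D$ vanish: $\Tr_2(D) = 0$ because $\cl{Q},\cl{R}$ are trace-preserving, and $\Tr_1(D) = \cl{Q}(I) - \cl{R}(I) = 0$ by the extra hypothesis. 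From $P_{\cl{S}}DP_{\cl{S}} = 0$ one gets the identity $D + FDF = 2\,P_{\cl{A}} D P_{\cl{A}}$; applying $\Tr_1$ and using $\Tr_1(FDF) = \Tr_2(D)$ then yields $\Tr_1(P_{\cl{A}} D P_{\cl{A}}) = \tfrac12(\Tr_1(D) + \Tr_2(D)) = 0$. Finally, a short direct computation — e.g.\ $\Tr_1(\ketbra{f_k}{f_l}) = \tfrac12(\delta_{kl} I - \ketbra{l}{k})$ in the Levi--Civita basis $f_k = \tfrac1{\sqrt2}\sum_{i,j}\epsilon_{ijk}\ket{ij}$ of $\cl{A}$ — shows that $\Tr_1$ is \emph{injective} on $\cl{L}(\cl{A})$ when $n = 3$, forcing $P_{\cl{A}} D P_{\cl{A}} = 0$. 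I expect pinning this injectivity down cleanly to be the main obstacle of the argument. It is precisely here that $n \le 3$ is essential: for $n \ge 4$ one has $\dim\cl{L}(\cl{A}) = (n(n-1)/2)^2 > n^2 = \dim\cl{L}(\cl{H}_n)$, so $\Tr_1$ necessarily has a kernel on $\cl{L}(\cl{A})$ and the statement genuinely fails (cf.\ Example~\ref{exam:n3counter} and the $n\ge 4$ counterexamples of \cite{Mag11}).

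For step (ii), suppose now $\Phi = \cl{Q} - \cl{R}$ is anti-self-dual. Being a difference of trace-preserving maps, $\Phi$ is trace-annihilating, and hence $\Phi(I) = -\Phi^\dagger(I) = 0$ because $\Phi^\dagger(I) = 0$ for any trace-annihilating map. Let $\cl{D}(X) := \tfrac1n \Tr(X) I$ be the completely depolarizing channel; it is unital, self-dual, and its Choi matrix $\tfrac1n I \otimes I$ is positive definite. Choose $r > 0$ large enough that $C_{\cl{D}} - \tfrac1{2r} C_\Phi \ge 0$ — this is possible precisely because $C_{\cl{D}}$ is positive definite (a perturbation of the identity channel would not work, since $C_{id_n}$ is rank one) — and set $\cl{E} := \cl{D} - \tfrac1{2r}\Phi$. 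Then $\cl{E}$ is completely positive (its Choi matrix is $C_{\cl{D}} - \tfrac1{2r}C_\Phi$), trace-preserving (since $\Phi$ is trace-annihilating), and unital (since $\Phi(I) = 0$), hence a unital quantum channel; and $\cl{E}^\dagger - \cl{E} = \tfrac1{2r}(\Phi - \Phi^\dagger) = \tfrac1r\Phi$, so $\cl{R} = \cl{Q} - \Phi = \cl{Q} + r(\cl{E} - \cl{E}^\dagger)$, completing the proof.
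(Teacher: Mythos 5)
Your proposal is correct, and it takes a genuinely different route from the paper's proof. Both arguments begin the same way, invoking Theorem~\ref{thm:equal_gate_choi} to reduce the hypothesis to $P_{\cl{S}}DP_{\cl{S}} = 0$ for $D := (T\otimes id_n)(C_{\cl{Q}}-C_{\cl{R}})$, but from there the paper proceeds by brute force: it expands $D$ in an explicit symmetric/antisymmetric basis, uses the partial-trace constraints to pin down the coefficients (in particular killing the antisymmetric block), and then hand-builds a unital channel $\cl{E}$ with a large, diagonally dominant Choi matrix so that $r(T\otimes id_n)(C_{\cl{E}}-C_{\cl{E}^\dagger})$ reproduces $D$ entry by entry --- a $9\times 9$ computation in the $n=3$ case. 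You instead isolate the structural content in two clean steps: the desired conclusion is equivalent to $\Phi := \cl{Q}-\cl{R}$ being anti-self-dual, which (via $F = P_{\cl{S}}-P_{\cl{A}}$ and Lemma~\ref{lem:dual_choi}) is equivalent to $P_{\cl{A}}DP_{\cl{A}}=0$ in addition to $P_{\cl{S}}DP_{\cl{S}}=0$; and any anti-self-dual difference of channels can be realized as $r(\cl{E}^\dagger-\cl{E})$ by perturbing the completely depolarizing channel, whose positive-definite Choi matrix absorbs the perturbation. I checked the two points that carry the weight: the equivalence $\Phi^\dagger=-\Phi \iff FDF=-D$ follows as you say from $FC_\Phi^TF = (T\otimes id_n)(FDF)$, and your formula $\Tr_1(\ketbra{f_k}{f_l}) = \tfrac12(\delta_{kl}I - \ketbra{l}{k})$ is right, giving $\Tr_1(M) = \tfrac12(\Tr(m)I - m^T)$ on $\cl{L}(\cl{A})$, which is visibly injective for $n=3$. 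Your identification of $P_{\cl{A}}DP_{\cl{A}}=0$ as the crux agrees exactly with the paper's own closing remark about why the argument fails for $n\geq 4$, but your version makes that mechanism the organizing principle rather than an after-the-fact observation: the dimension count $(n(n-1)/2)^2 \leq n^2$ explains why only $n\leq 3$ can work, and your step~(ii) is uniform in $n$, whereas the paper's explicit Choi matrices must be re-derived in each dimension. The one thing the paper's computation delivers that yours does not is a fully explicit $\cl{E}$ with concrete entries, but your $\cl{E} = \cl{D} - \tfrac{1}{2r}\Phi$ is arguably just as explicit and considerably easier to verify.
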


\begin{proof}

	As has already been discussed, the ``if'' direction clearly holds in any dimension. To see the ``only if'' direction, we first consider the case when $n = 2$. Let's choose an orthogonal basis of (unnormalized) symmetric and antisymmetric states:
	\begin{align*}
		\ket{a} := \ket{01} - \ket{10}, \quad \ket{s_1} := \ket{01} + \ket{10}, \quad \ket{s_2} := \ket{00}, \quad \ket{s_3} := \ket{11}.
	\end{align*}
	From Theorem~\ref{thm:equal_gate_choi} we know $P_{\cl{S}}(T \otimes id_2)(C_{\cl{\cl{Q}}} - C_{\cl{R}})P_{\cl{S}} = 0$. Thus there exist $\alpha \in \bb{R}$ and $c_1,c_2,c_3 \in \bb{C}$ such that we can write
	\begin{align*}
		(T \otimes id_2)(C_{\cl{Q}} - C_{\cl{R}}) = \alpha \ketbra{a}{a} + \sum_{j=1}^3 \big(c_{j}\ketbra{a}{s_j} + \overline{c_{j}}\ketbra{s_j}{a}\big).
	\end{align*}
	The fact that $\cl{\cl{Q}}$ and $\cl{R}$ are trace-preserving implies that $\Tr_2(C_{\cl{Q}}) = \Tr_2(C_{\cl{R}}) = I$, which implies that $\alpha = 0$, $c_1$ is purely imaginary, and $c_3 = \overline{c_2}$. It follows that we can write $(T \otimes id_2)(C_{\cl{Q}} - C_{\cl{R}})$ in the standard basis as
	\begin{align}\label{eq:2by2_qr}
		(T \otimes id_2)(C_{\cl{Q}} - C_{\cl{R}}) = \begin{bmatrix}0 & \overline{c_2} & -\overline{c_2} & 0 \\ c_2 & 0 & 2c_1 & \overline{c_2} \\ -c_2 & 2\overline{c_1} & 0 & -\overline{c_2} \\ 0 & c_2 & -c_2 & 0\end{bmatrix}.
	\end{align}
	Now choose $r \geq 2|c_2| + 2|c_1|$ and define a channel $\cl{E} : \cl{L}(\cl{H}_2) \rightarrow \cl{L}(\cl{H}_2)$ via the Choi matrix
	\begin{align*}
		C_{\cl{E}} = \frac{1}{2r}\begin{bmatrix}r & 0 & -2c_2 & 2\overline{c_1} \\ 0 & r & 0 & 2c_2 \\ -2\overline{c_2} & 0 & r & 0 \\ 2c_1 & 2\overline{c_2} & 0 & r\end{bmatrix}.
	\end{align*}
	It is easily verified that $\Tr_1(C_{\cl{E}}) = \Tr_2(C_{\cl{E}}) = I$ so $\cl{E}$ is unital and trace-preserving, and the fact that $\cl{E}$ is completely positive follows from the fact that its Choi matrix is diagonally dominant and hence positive semidefinite. It is a simple calculation using Lemma~\ref{lem:dual_choi} to verify that
	\begin{align}\label{eq:2by2_ee}
		r(T \otimes id_2)(C_{\cl{E}} - C_{\cl{E}^{\dagger}}) = \begin{bmatrix}0 & \overline{c_2} & -\overline{c_2} & 0 \\ c_2 & 0 & 2c_1 & \overline{c_2} \\ -c_2 & 2\overline{c_1} & 0 & -\overline{c_2} \\ 0 & c_2 & -c_2 & 0\end{bmatrix}.
	\end{align}
	By comparing Equations~\eqref{eq:2by2_qr} and~\eqref{eq:2by2_ee} we see that $\cl{Q} - \cl{R} = r(\cl{E} - \cl{E}^{\dagger})$, which completes the proof for the case when $n = 2$.
	
	The case when $n = 3$ and $\cl{Q}(I) = \cl{R}(I)$ is proved analogously, but the algebra is more involved. Choose an orthogonal basis of (unnormalized) symmetric and antisymmetric states:
	\begin{align*}
		\ket{a_1} := \ket{01} - \ket{10}, \quad & \ket{s_1} := \ket{01} + \ket{10} \\
		\ket{a_2} := \ket{02} - \ket{20}, \quad & \ket{s_2} := \ket{02} + \ket{20} \\
		\ket{a_3} := \ket{12} - \ket{21}, \quad & \ket{s_3} := \ket{12} + \ket{21} \\
		& \ket{s_4} := \ket{00} \\
		& \ket{s_5} := \ket{11} \\
		& \ket{s_6} := \ket{22}
	\end{align*}
	From Theorem~\ref{thm:equal_gate_choi} we know $P_{\cl{S}}(T \otimes id_3)(C_{\cl{\cl{Q}}} - C_{\cl{R}})P_{\cl{S}} = 0$. Thus there exist $\{\alpha_j\} \in \bb{R}$ and $\{c_{j,k}\} \in \bb{C}$ ($1 \leq j \leq 3, 1 \leq k \leq 6$) such that we can write
	\begin{align*}
		(T \otimes id_3)(C_{\cl{Q}} - C_{\cl{R}}) = \sum_{j=1}^3 \alpha_j \ketbra{a_j}{a_j} + \sum_{j=1}^3\sum_{k=1}^6 \big(c_{j,k}\ketbra{a_j}{s_k} + \overline{c_{j,k}}\ketbra{s_k}{a_j}\big).
	\end{align*}
	The facts that $\cl{Q}$ and $\cl{R}$ are trace-preserving and $\cl{Q}(I) = \cl{R}(I)$ imply that $\Tr_1((T \otimes id_3)(C_{\cl{Q}} - C_{\cl{R}})) = \Tr_2((T \otimes id_3)(C_{\cl{Q}} - C_{\cl{R}})) = 0$, which implies $\alpha_1 = \alpha_2 = \alpha_3 = 0$. The partial trace conditions also imply the following conditions:
	\begin{align*}
		{\rm Re}(c_{1,1}) & = {\rm Re}(c_{3,3}) = -{\rm Re}(c_{2,2}) \\
		c_{2,3} & = \overline{c_{1,4}} - c_{1,5} - \overline{c_{3,2}} \\
		c_{2,6} & = \overline{c_{2,4}} + \overline{c_{3,1}} - c_{1,3} \\
		c_{3,6} & = c_{1,2} + \overline{c_{2,1}} + \overline{c_{3,5}}
	\end{align*}
	If $c_{1,1} = b + id_1$, ${\rm Im}(c_{2,2}) = d_2$ and ${\rm Im}(c_{3,3}) = d_3$ then we can write $(T \otimes id_3)(C_{\cl{Q}} - C_{\cl{R}})$ in the standard basis as
	\begin{align}\label{eq:3by3_qr}
		\begin{bmatrix}
			0 & \overline{c_{1,4}} & \overline{c_{2,4}} & -\overline{c_{1,4}} & 0 & \overline{c_{3,4}} & -\overline{c_{2,4}} & -\overline{c_{3,4}} & 0 \\
			* & 2b & c_{1,2} + \overline{c_{2,1}} & 2id_1 & c_{1,5} & c_{1,3} + \overline{c_{3,1}} & c_{1,2} - \overline{c_{2,1}} & c_{1,3} - \overline{c_{3,1}} & c_{1,6} \\
			* & * & -2b & -\overline{c_{1,2}} + c_{2,1} & c_{2,5} & \overline{c_{1,4}} - c_{1,5} & 2id_2 & c_{2,3}-\overline{c_{3,2}} & c_{2,6} \\
			* & * & * & -2b & -c_{1,5} & \overline{c_{3,1}} - c_{1,3} & -\overline{c_{2,1}} - c_{1,2} & -\overline{c_{3,1}} - c_{1,3} & -c_{1,6} \\
			* & * & * & * & 0 & \overline{c_{3,5}} & -\overline{c_{2,5}} & -\overline{c_{3,5}} & 0 \\
			* & * & * & * & * & 2b & c_{3,2} - \overline{c_{2,3}} & 2id_3 & c_{3,6} \\
			* & * & * & * & * & * & 2b & -\overline{c_{1,4}} + c_{1,5} & -c_{2,6} \\
			* & * & * & * & * & * & * & -2b & -c_{3,6} \\
			* & * & * & * & * & * & * & * & 0\end{bmatrix},
	\end{align}
	where the $*$ in the $(j,k)$-entry indicates the complex conjugate of the $(k,j)$-entry. Now let $r > 0$ be arbitrarily large, $\varepsilon > 0$ and $0 < s,t < 1 - \varepsilon$, and define $u := 1 - \varepsilon - s$ and $v := 1 - \varepsilon - t$. We shall consider the linear map $\cl{E}$ defined by the Choi matrix
	\begin{align*}
		C_{\cl{E}} = \frac{1}{3r}\begin{bmatrix}
			3r\varepsilon & 0 & 0 & -3c_{1,4} & -3id_1 & -3c_{1,2} & -3c_{2,4} & -3c_{2,1} & -3id_2 \\
			* & 3rs & 3c_{1,2} & 0 & 0 & 0 & -3c_{3,4} & -3c_{3,1} & 3c_{1,4} - 6c_{3,2} \\
			* & * & 3ru & 0 & 3\overline{c_{1,3}} & 3c_{1,4} & 0 & 0 & 3c_{2,4} + 3c_{3,1} \\
			* & * & * & 3rt & -3c_{1,5} & -3c_{1,3} & -3c_{2,1} & -3c_{2,5} & 3c_{1,5} \\
			* & * & * & * & 3r\varepsilon & 0 & -3c_{3,1} & -3c_{3,5} & -3id_3 \\
			* & * & * & * & * & 3rv & -3\overline{c_{1,6}} & 0 & 3c_{2,1} + 3c_{3,5} \\
			* & * & * & * & * & * & 3rv & 3c_{1,5} & 3c_{1,3} \\
			* & * & * & * & * & * & * & 3ru & -3c_{1,2} \\
			* & * & * & * & * & * & * & * & 3r(1 - u - v)\end{bmatrix}.
	\end{align*}
	In particular, choose $s$ and $t$ so that $s+t \geq 1$ and $s-t = 2b/r$. This is always possible because we can choose $s$ so that $2b/r < s < 1$, set $t = s-2b/r$, and then choose some sufficiently small $\varepsilon > 0$. It follows that each diagonal entry of $C_{\cl{E}}$ is strictly positive, and so by choosing $r$ sufficiently large it becomes diagonally dominant and hence positive semidefinite. It is worth noting that, although $t$ does depend on $r$, increasing $r$ only increases $t$ so diagonal dominance won't be interfered with as $t$ varies. It is easily verified that $\Tr_1(C_{\cl{E}}) = \Tr_2(C_{\cl{E}}) = I$ so $\cl{E}$ is in fact a unital quantum channel.
	
	Furthermore, it is a simple (albeit tedious) calculation using Lemma~\ref{lem:dual_choi} to verify that $r(T \otimes id_2)(C_{\cl{E}} - C_{\cl{E}^{\dagger}})$, when written in the standard basis, is exactly the matrix~\eqref{eq:3by3_qr}. It follows that $\cl{Q} - \cl{R} = r(\cl{E} - \cl{E}^{\dagger})$, which completes the proof.
\end{proof}

It is worth pointing out where the techniques used in the proof of Theorem~\ref{thm:small_dims} break down in the $n \geq 4$ case. The proofs in the $n = 2$ and $n = 3$ cases both begin by making use of the constraint $\Tr_2((T \otimes id_n)(C_{\cl{Q}} - C_{\cl{R}})) = 0$ (and similarly for $\Tr_1$ when $n = 3$) to restrict the $n(n-1)/2$ real and $n^2(n-1)(n+1)/4$ complex coefficients that define $(T \otimes id_n)(C_{\cl{Q}} - C_{\cl{R}})$. In particular, we need the partial trace restrictions to imply that $P_{\cl{A}}(T \otimes id_n)(C_{\cl{Q}} - C_{\cl{R}})P_{\cl{A}} = 0$, where $P_{\cl{A}} := I - P_{\cl{S}}$ is the projection onto the antisymmetric subspace. In the $n \geq 4$ case the two partial trace constraints aren't strong enough to guarantee this property holds, as demonstrated by the map in the proof of \cite[Theorem 1]{Mag11}.

\section{Minimum Gate Fidelity}\label{sec:minGateFid}

Calculating the minimum gate fidelity is a problem that is expected to be difficult, and few general methods of approximating and bounding it are known. Our main result of this section shows that the minimal fidelity of a channel can be written in terms of the $S(1)$-norm \cite{JK10a,JK10b}, defined on positive semidefinite operators as
\begin{align*}
	\big\|X\big\|_{S(1)} := \sup_{\ket{\phi},\ket{\psi}}\big\{\bra{\phi\psi}X\ket{\phi\psi} \big\}.
\end{align*}
Our reason for expressing the minimum gate fidelity in terms of this norm rather than in terms of other related minimizations or maximizations is that several bounds, inequalities and properties of the $S(1)$-norm are already known \cite{JK10a,JK10b,JKPP10,J10}, whereas other similar supremums or infimums appear to be more nebulous. In particular, computation of the $S(1)$-norm on positive semidefinite operators is equivalent to the problem of determining whether or not a given operator is an entanglement witness \cite[Corollary 4.9]{JK10a}, which is equivalent to the problem of determining whether or not a superoperator is positive. The problems of characterizing positive superoperators \cite{TT83,BFP04} and entanglement witnesses and separable states \cite{P96,HHH96,B02,DPS04,CKo09,HHHH09} have been studied extensively, so an abundance of results from operator theory and entanglement theory now apply in this setting.

All separability criteria that have been developed over the past several years now automatically translate into methods of bounding minimum gate fidelity. As a particularly important example of this, symmetric extensions \cite{DPS04} can now be used to compute minimum gate fidelity within any desired accuracy simply by performing the optimization over separable states over the set of states with $k$-symmetric extension instead. The optimization over states with $k$-symmetric extension is a semidefinite program \cite{VB96}, which can be computed efficiently \cite{GLS93}, and precise bounds for how far away the optimal value of the $k$th semidefinite program is from the optimization over separable states are given in \cite{NOP09}.

\begin{thm}\label{thm:min_fid}
	Let $\cl{E} : \cl{L}(\cl{H}_n) \rightarrow \cl{L}(\cl{H}_n)$ be a quantum channel and let $\lambda_1$ be the maximal eigenvalue of $P_{\cl{S}}(T \otimes id_n)(C_{\cl{E}})P_{\cl{S}}$. Then
	\begin{align*}
		\cl{F}_{\cl{E}}^{min} = \lambda_1 - \big\| \lambda_1 P_{\cl{S}} - P_{\cl{S}}(T \otimes id_n)(C_{\cl{E}})P_{\cl{S}} \big\|_{S(1)}.
	\end{align*}
\end{thm}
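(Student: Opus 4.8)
The plan is to leverage Lemma~\ref{lem:fidel_choi} to reduce everything to an optimization over vectors of the form $\ket{\phi\phi}$, and then to recognize that such an optimization can be rewritten in terms of the $S(1)$-norm after passing through the symmetric subspace. Write $A := P_{\cl{S}}(T \otimes id_n)(C_{\cl{E}})P_{\cl{S}}$, which is Hermitian and supported on $\cl{S}$ with maximal eigenvalue $\lambda_1$. By Lemma~\ref{lem:fidel_choi}, $\cl{F}_{\cl{E}}(\ket{\phi}) = \bra{\phi\phi}(T \otimes id_n)(C_{\cl{E}})\ket{\phi\phi}$, and since $\ket{\phi\phi} \in \cl{S}$ we have $P_{\cl{S}}\ket{\phi\phi} = \ket{\phi\phi}$, so $\cl{F}_{\cl{E}}(\ket{\phi}) = \bra{\phi\phi}A\ket{\phi\phi}$. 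Therefore $\cl{F}_{\cl{E}}^{min} = \min_{\ket{\phi}} \bra{\phi\phi}A\ket{\phi\phi}$.

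Next I would introduce the positive semidefinite operator $Y := \lambda_1 P_{\cl{S}} - A \geq 0$ (nonnegative precisely because $\lambda_1$ is the top eigenvalue of $A$ restricted to $\cl{S}$, and $A$ vanishes on $\cl{S}^\perp$). For any unit vectors $\ket{\phi},\ket{\psi}$, observe that $\bra{\phi\psi}P_{\cl{S}}\ket{\phi\psi} = \frac12(1 + |\braket{\phi}{\psi}|^2) \leq 1$, with equality iff $\ket{\phi}$ and $\ket{\psi}$ are parallel (up to phase), i.e.\ iff $\ket{\phi\psi}$ is, up to phase, of the form $\ket{\chi\chi}$. Consequently
\begin{align*}
	\big\|Y\big\|_{S(1)} = \sup_{\ket{\phi},\ket{\psi}} \bra{\phi\psi}\big(\lambda_1 P_{\cl{S}} - A\big)\ket{\phi\psi}.
\end{align*}
The crux is that this supremum is attained at a product vector of the form $\ket{\chi\chi}$: on such vectors the integrand equals $\lambda_1 - \bra{\chi\chi}A\ket{\chi\chi}$, while on any non-parallel pair the $P_{\cl{S}}$-term is strictly smaller than $1$, which only decreases the value (one has to also check the $A$-term cannot compensate — here the fact that $A = P_{\cl{S}} A P_{\cl{S}}$ together with $\bra{\phi\psi}A\ket{\phi\psi} = \bra{\phi\psi}P_{\cl{S}}(\cdots)P_{\cl{S}}\ket{\phi\psi}$ and a Cauchy–Schwarz/norm estimate relating $\bra{\phi\psi}A\ket{\phi\psi}$ to its value on the closest symmetric product vector will do the job). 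Granting that, $\big\|Y\big\|_{S(1)} = \sup_{\ket{\chi}} \big(\lambda_1 - \bra{\chi\chi}A\ket{\chi\chi}\big) = \lambda_1 - \min_{\ket{\chi}}\bra{\chi\chi}A\ket{\chi\chi} = \lambda_1 - \cl{F}_{\cl{E}}^{min}$, and rearranging gives the claimed identity.

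The step I expect to be the main obstacle is justifying that the $S(1)$-supremum of $Y$ is actually achieved on the ``diagonal'' product vectors $\ket{\chi\chi}$ rather than on a genuinely asymmetric product vector $\ket{\phi\psi}$ with $\ket{\phi}\neq\ket{\psi}$. The clean way to handle this is to note that $Y$ is supported on $\cl{S}$, so $\bra{\phi\psi}Y\ket{\phi\psi} = \bra{\phi\psi}P_{\cl{S}}\,Y\,P_{\cl{S}}\ket{\phi\psi}$; writing $\ket{w} := P_{\cl{S}}\ket{\phi\psi}$ and using $Y \geq 0$ on $\cl{S}$ one gets $\bra{\phi\psi}Y\ket{\phi\psi} = \bra{w}Y\ket{w} \le \lambda_{\max}(Y|_{\cl{S}})\,\|w\|^2 = \lambda_{\max}(Y|_{\cl{S}})\cdot\frac12(1+|\braket{\phi}{\psi}|^2)$. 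Since $\lambda_{\max}(Y|_{\cl{S}}) = \lambda_1 - \lambda_{\min}(A|_{\cl{S}}) = \lambda_1 - \cl{F}_{\cl{E}}^{min} \ge 0$ and $\frac12(1+|\braket{\phi}{\psi}|^2)\le 1$, this bounds the supremum by $\lambda_1 - \cl{F}_{\cl{E}}^{min}$; the reverse inequality is immediate by plugging in a minimizing $\ket{\chi\chi}$. (A subtlety: one must confirm that the minimizer of $\bra{\chi\chi}A\ket{\chi\chi}$ over unit $\ket{\chi}$ is indeed $\cl{F}_{\cl{E}}^{min}$ — which is precisely the content of the first paragraph via Lemma~\ref{lem:fidel_choi} — and that $\cl{F}_{\cl{E}}^{min}\ge 0$ so that $Y\ge 0$ makes sense, which holds since gate fidelity is a fidelity and hence nonnegative.) With these pieces assembled the theorem follows.
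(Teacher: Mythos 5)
Your reduction to showing that the supremum defining $\big\|Y\big\|_{S(1)}$, with $Y := \lambda_1 P_{\cl{S}} - A$, is attained on vectors of the form $\ket{\chi\chi}$ is exactly right, and matches the paper's strategy. But the ``clean way'' you propose to close that step does not work. Bounding $\bra{w}Y\ket{w} \leq \lambda_{\max}(Y|_{\cl{S}})\,\|w\|^2 \leq \lambda_{\max}(Y|_{\cl{S}})$ for $\ket{w} := P_{\cl{S}}\ket{\phi\psi}$ discards precisely the structure you need: $\lambda_{\max}(Y|_{\cl{S}}) = \lambda_1 - \lambda_{\min}(A|_{\cl{S}})$, where $\lambda_{\min}(A|_{\cl{S}})$ is the minimum of $\bra{w}A\ket{w}$ over \emph{all} unit vectors in $\cl{S}$, not just those of the form $\ket{\chi\chi}$. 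The minimizing eigenvector of $A|_{\cl{S}}$ is symmetric but generically has Schmidt rank greater than one, so $\lambda_{\min}(A|_{\cl{S}}) \leq \cl{F}_{\cl{E}}^{min}$ can be strict, and your chain of inequalities only yields $\big\|Y\big\|_{S(1)} \leq \|Y\|$, which is trivially true and does not prove the theorem. (That these two norms genuinely differ is the entire reason the $S(1)$-norm appears here; see the corollary following the theorem, where $\lambda_1 - \lambda_{n(n+1)/2}$ is only an \emph{upper bound} on $\lambda_1 - \cl{F}_{\cl{E}}^{min}$.) A smaller quibble: $Y \geq 0$ follows from $\lambda_1$ being the top eigenvalue of $A$ on its support $\cl{S}$, not from nonnegativity of the gate fidelity.

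The parenthetical in your second paragraph actually gestures at the correct repair, which is what the paper does. For $\ket{\psi} \neq \ket{\chi}$, the vector $P_{\cl{S}}\ket{\psi\chi}$ is symmetric of Schmidt rank at most $2$, so the Takagi factorization writes it as $\alpha\ket{\rho\rho} + \beta\ket{\sigma\sigma}$ with $\alpha,\beta \geq 0$. Assuming without loss of generality that $\bra{\rho\rho}Y\ket{\rho\rho} \geq \bra{\sigma\sigma}Y\ket{\sigma\sigma}$, a Cauchy--Schwarz estimate on the cross term gives $|\bra{\sigma\sigma}Y\ket{\rho\rho}| \leq \bra{\rho\rho}Y\ket{\rho\rho}$, hence $\bra{\psi\chi}Y\ket{\psi\chi} \leq (\alpha+\beta)^2\,\bra{\rho\rho}Y\ket{\rho\rho}$; an explicit computation of the Takagi singular values of the rank-two symmetric matrix associated with $P_{\cl{S}}\ket{\psi\chi}$ then shows $\alpha + \beta = 1$. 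This bounds the $S(1)$-supremum by its value on diagonal product vectors, which is the step your argument is missing.
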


\begin{proof}
	Using Lemma~\ref{lem:fidel_choi} with $\Lambda := \cl{E}$ reveals that
	\begin{align}\label{eq:min_formula}
		\cl{F}_{\cl{E}}^{min} = \min_{\ket{\phi} \in \cl{H}_n}\big\{ \bra{\phi\phi}(T \otimes id_n)(C_{\cl{E}})\ket{\phi\phi} \big\}  = \lambda_1 - \max_{\ket{\phi} \in \cl{H}_n}\big\{ \bra{\phi\phi}(T \otimes id_n)(\lambda_1 I - C_{\cl{E}})\ket{\phi\phi} \big\}.
	\end{align}
	For convenience, we will now define $X := P_{\cl{S}}(T \otimes id_n)(\lambda_1 I - C_{\cl{E}})P_{\cl{S}}$. Notice that $X$ is positive semidefinite. We now note that
	\begin{align*}
		\max_{\ket{\phi} \in \cl{H}_n}\big\{ \bra{\phi\phi}(T \otimes id_n)(\lambda_1 I - C_{\cl{E}})\ket{\phi\phi} \big\} \leq \max_{\ket{\psi},\ket{\chi} \in \cl{H}_n}\big\{ \bra{\psi\chi}X\ket{\psi\chi} \big\}
	\end{align*}
	trivially by letting $\ket{\psi} = \ket{\chi} = \ket{\phi}$. To see that the opposite inequality holds as well (and hence complete the proof), suppose $\ket{\psi} \neq \ket{\chi}$ and observe that $P_{\cl{S}}\ket{\psi\chi} = \frac{1}{2}(\ket{\psi\chi} + \ket{\chi\psi})$ is a scalar multiple of a symmetric state with Schmidt rank $2$. It follows via the Takagi factorization that we can write $P_{\cl{S}}\ket{\psi\chi} = \alpha\ket{\rho\rho} + \beta\ket{\sigma\sigma}$ for some $\ket{\rho},\ket{\sigma} \in \cl{H}_n$ and $\alpha,\beta \geq 0$. Suppose without loss of generality that
	\begin{align*}
		\bra{\rho\rho}X\ket{\rho\rho} \geq \bra{\sigma\sigma}X\ket{\sigma\sigma}.
	\end{align*}
	
	Then write $X$ in its Spectral Decomposition as $X = \sum_i \lambda_i \ketbra{v_i}{v_i}$ and define the $i^{th}$ component of two vectors $\rho^\prime$ and $\sigma^\prime$ by $\rho_i^\prime := \sqrt{\lambda_i}|\braket{v_i}{\rho\rho}|$ and $\sigma_i^\prime := \sqrt{\lambda_i}|\braket{\sigma\sigma}{v_i}|$. Applying the Cauchy-Schwarz inequality to $\rho^\prime$ and $\sigma^\prime$ shows
	\begin{align*}
		|\bra{\sigma\sigma}X\ket{\rho\rho}| \leq \sqrt{\bra{\rho\rho}X\ket{\rho\rho}}\sqrt{\bra{\sigma\sigma}X\ket{\sigma\sigma}} \leq \bra{\rho\rho}X\ket{\rho\rho}.
	\end{align*}
	Putting all of this together shows that
	\begin{align*}
		\bra{\psi\chi}X\ket{\psi\chi} & = (\alpha\bra{\rho\rho} + \beta\bra{\sigma\sigma})X(\alpha\ket{\rho\rho} + \beta\ket{\sigma\sigma}) \\
		& = \alpha^2\bra{\rho\rho}X\ket{\rho\rho} + \alpha\beta(\bra{\rho\rho}X\ket{\sigma\sigma} + \bra{\sigma\sigma}X\ket{\rho\rho}) + \beta^2\bra{\sigma\sigma}X\ket{\sigma\sigma} \\
		& \leq (\alpha^2 + \beta^2)\bra{\rho\rho}X\ket{\rho\rho} + \alpha\beta(|\bra{\rho\rho}X\ket{\sigma\sigma}| + |\bra{\sigma\sigma}X\ket{\rho\rho}|) \\
		& \leq (\alpha^2 + 2\alpha\beta + \beta^2)\bra{\rho\rho}X\ket{\rho\rho} \\
		& = (\alpha + \beta)^2\bra{\rho\rho}X\ket{\rho\rho}.
	\end{align*}
	
	Thus, if we can prove that $\alpha + \beta \leq 1$ then we are done. To this end, first note that without loss of generality we can assume that $\braket{\psi}{\chi}$ is real, simply by adjusting the global phase between $\ket{\psi}$ and $\ket{\chi}$ appropriately. Now recall from the Takagi factorization that $\alpha$ and $\beta$ are the square roots of the eigenvalues of the matrix
	\begin{align*}
		AA^* & := \frac{1}{4}\big(\ket{\psi}\overline{\bra{\chi}} + \ket{\chi}\overline{\bra{\psi}}\big)\big(\overline{\ket{\chi}}\bra{\psi} + \overline{\ket{\psi}}\bra{\chi}\big) \\
		& = \frac{1}{4}\big(\ketbra{\psi}{\psi} + \braket{\psi}{\chi} (\ketbra{\chi}{\psi} + \ketbra{\psi}{\chi}) + \ketbra{\chi}{\chi}\big).
	\end{align*}
	It is easily verified that eigenvectors of $AA^*$ are $\ket{\psi} \pm \ket{\chi}$ and the associated eigenvalues are
	\begin{align*}
		\frac{1}{4}\big( \braket{\psi}{\chi}^2 \pm 2\braket{\psi}{\chi} + 1 \big).
	\end{align*}
	If we add the square roots of these eigenvalues, we get
	\begin{align*}
		\alpha + \beta & = \frac{1}{2}\sqrt{ \braket{\psi}{\chi}^2 + 2\braket{\psi}{\chi} + 1 } + \frac{1}{2}\sqrt{ \braket{\psi}{\chi}^2 - 2\braket{\psi}{\chi} + 1 } \\
		& = \frac{1}{2}\sqrt{ (\braket{\psi}{\chi} + 1)^2 } + \frac{1}{2}\sqrt{ (\braket{\psi}{\chi} - 1)^2 } \\
		& = \frac{1}{2}\big|1 + \braket{\psi}{\chi}\big| + \frac{1}{2}\big|1 - \braket{\psi}{\chi}\big| \\
		& = 1,
	\end{align*}
	where the final equality follows from the fact that $-1 \leq \braket{\psi}{\chi} \leq 1$. The result follows.
\end{proof}

Several bounds and results on the minimum gate fidelity follow immediately from the corresponding results on the $S(1)$-norm derived in \cite{JK10a,JK10b}. We present a brief selection of these results here for completeness.

\begin{cor}
	Let $\cl{E} : \cl{L}(\cl{H}_n) \rightarrow \cl{L}(\cl{H}_n)$ be a quantum channel. Denote the eigenvalues of $P_{\cl{S}}(T \otimes id_n)(C_{\cl{E}})P_{\cl{S}}$ supported on $P_{\cl{S}}$ by $\lambda_1 \geq \lambda_2 \geq \cdots \geq \lambda_{n(n+1)/2}$ (i.e., these are the eigenvalues of $P_{\cl{S}}(T \otimes id_n)(C_{\cl{E}})P_{\cl{S}}$ with $n(n-1)/2$ zero eigenvalues removed). Let $\alpha_j$ be the maximal Schmidt coefficient of the eigenvector corresponding to $\lambda_j$. Then
	\begin{align*}
		\max_{j}\{ (\lambda_1 - \lambda_j) \alpha_j^2 \} \leq \lambda_1 - \cl{F}_{\cl{E}}^{min} \leq \min\big\{\lambda_1 - \lambda_{n(n+1)/2}, \sum_j (\lambda_1 - \lambda_j) \alpha_j^2\big\}.
	\end{align*}
\end{cor}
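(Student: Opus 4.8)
The plan is to start from Theorem~\ref{thm:min_fid}, which says that $\lambda_1 - \cl{F}_{\cl{E}}^{min} = \big\| \lambda_1 P_{\cl{S}} - P_{\cl{S}}(T \otimes id_n)(C_{\cl{E}})P_{\cl{S}} \big\|_{S(1)}$, and then apply the known two-sided estimates for the $S(1)$-norm from \cite{JK10a,JK10b} to the positive semidefinite operator $X := \lambda_1 P_{\cl{S}} - P_{\cl{S}}(T \otimes id_n)(C_{\cl{E}})P_{\cl{S}}$. First I would diagonalize $X$: since $P_{\cl{S}}(T \otimes id_n)(C_{\cl{E}})P_{\cl{S}}$ has eigenvalues $\lambda_1 \geq \cdots \geq \lambda_{n(n+1)/2}$ on $\cl{S}$ and $0$ on $\cl{A}$, the operator $X$ has eigenvalues $\lambda_1 - \lambda_j$ on $\cl{S}$ (with the same eigenvectors $\ket{v_j}$) and eigenvalue $\lambda_1$ on all of $\cl{A}$. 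The key point is that the $\ket{v_j}$ are symmetric states, so their maximal Schmidt coefficient $\alpha_j$ is well-defined and is exactly the relevant quantity in the $S(1)$-norm bounds; the antisymmetric eigenvectors contribute nothing to $\|X\|_{S(1)}$ because product states $\ket{\phi\psi}$ have no antisymmetric component issue in the relevant sense — more precisely, one restricts attention to $P_{\cl{S}}\ket{\phi\psi}$ as in the proof of Theorem~\ref{thm:min_fid}, so only the symmetric part of $X$ matters.

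Next I would invoke the standard $S(1)$-norm sandwich inequality: for a positive semidefinite $X = \sum_i \mu_i \ketbra{v_i}{v_i}$, one has $\max_i \mu_i \gamma_i^2 \leq \|X\|_{S(1)} \leq \min\{\max_i \mu_i, \sum_i \mu_i \gamma_i^2\}$, where $\gamma_i$ is the maximal Schmidt coefficient of $\ket{v_i}$ (this is \cite[Corollary 4.9 and Theorem 4.13]{JK10a} or the analogous results in \cite{JK10b}). Applied to our $X$, the eigenvalues are $\{\lambda_1 - \lambda_j\}_{j=1}^{n(n+1)/2}$ with eigenvectors $\ket{v_j}$ and maximal Schmidt coefficients $\alpha_j$, together with eigenvalue $\lambda_1$ on $\cl{A}$. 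The antisymmetric eigenvectors have maximal Schmidt coefficient $1/\sqrt{2}$ (in an appropriate normalization) but, as noted, they drop out because $\|X\|_{S(1)}$ only sees the symmetric compression — alternatively one can simply note $X P_{\cl{A}} = \lambda_1 P_{\cl{A}}$ but the supremum defining $\|X\|_{S(1)}$ over product states reduces to a supremum over symmetric states via $P_{\cl{S}}\ket{\phi\psi}$, exactly the reduction already carried out in the proof of Theorem~\ref{thm:min_fid}. Thus only the terms $j = 1, \ldots, n(n+1)/2$ survive, giving $\max_j (\lambda_1 - \lambda_j)\alpha_j^2 \leq \|X\|_{S(1)} \leq \min\{\lambda_1 - \lambda_{n(n+1)/2}, \sum_j (\lambda_1 - \lambda_j)\alpha_j^2\}$.

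Finally, substituting $\|X\|_{S(1)} = \lambda_1 - \cl{F}_{\cl{E}}^{min}$ yields the claimed chain of inequalities directly. The main obstacle — or rather the one point requiring care — is justifying that the antisymmetric part of $X$ genuinely does not inflate either bound: the upper bound $\lambda_1 - \lambda_{n(n+1)/2}$ must be the correct ``$\max_i \mu_i$'' term even though $X$ literally has $\lambda_1$ as an eigenvalue on $\cl{A}$, and the sum $\sum_j (\lambda_1 - \lambda_j)\alpha_j^2$ must not need an extra antisymmetric contribution. Both follow from the observation in the proof of Theorem~\ref{thm:min_fid} that $\bra{\phi\psi}X\ket{\phi\psi} = \bra{\phi\psi}P_{\cl{S}} X P_{\cl{S}}\ket{\phi\psi}$ combined with the decomposition $P_{\cl{S}}\ket{\phi\psi} = \alpha\ket{\rho\rho} + \beta\ket{\sigma\sigma}$ with $\alpha + \beta \leq 1$, so effectively one is computing the $S(1)$-norm of the genuinely symmetric operator $P_{\cl{S}} X P_{\cl{S}}$ restricted to $\cl{S}$, whose eigenvalues are precisely $\{\lambda_1 - \lambda_j\}$. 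Everything else is routine bookkeeping of which known $S(1)$-bound is being cited.
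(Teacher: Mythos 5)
Your route is the paper's route: substitute Theorem~\ref{thm:min_fid} into known estimates for the $S(1)$-norm, getting the first upper bound from $\|\cdot\|_{S(1)}\leq\|\cdot\|$, the second from the bound of \cite{JK10a}, and the lower bound from product test vectors built from the maximal Schmidt coefficients of the eigenvectors. The conclusion is correct, but one step is misstated, and that misstatement is what generates almost all of the work in your second and third paragraphs: the operator $X = \lambda_1 P_{\cl{S}} - P_{\cl{S}}(T\otimes id_n)(C_{\cl{E}})P_{\cl{S}}$ does \emph{not} have eigenvalue $\lambda_1$ on the antisymmetric subspace. The first term is $\lambda_1 P_{\cl{S}}$, not $\lambda_1 I$, so both summands annihilate $\cl{A}$; hence $X=P_{\cl{S}}XP_{\cl{S}}$ identically, its nonzero spectrum is exactly $\{\lambda_1-\lambda_j\}$, its operator norm is exactly $\lambda_1-\lambda_{n(n+1)/2}$, and there is no antisymmetric contribution to explain away. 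Note also that the escape hatch you invoke --- the identity $\bra{\phi\psi}X\ket{\phi\psi}=\bra{\phi\psi}P_{\cl{S}}XP_{\cl{S}}\ket{\phi\psi}$ --- is inconsistent with your claimed spectrum: if $X$ really acted as $\lambda_1$ on $\cl{A}$, a product state $\ket{\phi\psi}$ with $\ket{\phi}\neq\ket{\psi}$ would pick up the nonzero term $\lambda_1\|P_{\cl{A}}\ket{\phi\psi}\|^2$ and the identity would fail. So the argument survives only because the premise it is defending against is false; you should simply correct the spectrum of $X$ and delete the discussion of $\cl{A}$. One smaller point: the paper does not cite the lower bound $\max_j(\lambda_1-\lambda_j)\alpha_j^2\leq\|X\|_{S(1)}$ as a packaged ``sandwich'' result but derives it directly by evaluating $\bra{v}X\ket{v}$ on the product state $\ket{v}$ achieving the maximal Schmidt coefficient of $\ket{v_j}$ and discarding the remaining nonnegative terms; either reproduce that two-line computation or check that the statement you attribute to \cite{JK10a} actually appears there in that form. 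With those repairs your proof coincides with the paper's.
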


\begin{proof}
	The fact that $\lambda_1 - \cl{F}_{\cl{E}}^{min} \leq \lambda_1 - \lambda_{n(n+1)/2}$ follows immediately from Theorem~\ref{thm:min_fid} and the fact that $\|\cdot\|_{S(1)} \leq \|\cdot\|$. The other upper bound of $\lambda_1 - \cl{F}_{\cl{E}}^{min}$ follows from \cite[Theorem 3.3 and Proposition 4.11]{JK10a}. The lower bound can be derived by using the spectral decomposition to write
	\begin{align*}
		P_{\cl{S}}(T \otimes id_n)(C_{\cl{E}})P_{\cl{S}} = \sum_{j} \lambda_j \ketbra{v_j}{v_j}.
	\end{align*}
	If $\ket{v} \in \cl{H}_n \otimes \cl{H}_n$ is the separable state corresponding to the maximal Schmidt coefficient $\alpha_j$ of $\ket{v_j}$ then
	\begin{align*}
		\bra{v}P_{\cl{S}}(T \otimes id_n)(\lambda_1I - C_{\cl{E}})P_{\cl{S}}\ket{v} & = \sum_i (\lambda_1 - \lambda_i) |\braket{v_i}{v}|^2 \\
		& = (\lambda_1 - \lambda_j) \alpha_j^2 + \sum_{i\neq j} (\lambda_1 - \lambda_i) |\braket{v_i}{v}|^2 \\
		& \geq (\lambda_1 - \lambda_j) \alpha_j^2.
	\end{align*}
	The corresponding lower bound follows by letting $j$ range from $1$ to $n(n+1)/2$.
\end{proof}

\begin{remark}{\rm
In the case that $n = 2$, the $S(1)$-norm can be efficiently computed to any desired accuracy via semidefinite programming \cite{JK10b}. As a corollary of this fact, we now have a semidefinite program for efficiently computing $\cl{F}_{\cl{E}}^{min}$ of qubit channels $\cl{E} : \cl{L}(\cl{H}_2) \rightarrow \cl{L}(\cl{H}_2)$ to any desired accuracy. The primal and dual forms of the semidefinite program in question are as follows:
\begin{align*}\label{sp:minGateFid}
\begin{matrix}
\begin{tabular}{r l}
\multicolumn{2}{c}{{\bf Primal problem}} \\
\text{minimize:} & $\lambda_1 - \Tr\big((\lambda_1 P_{\cl{S}} - P_{\cl{S}}(T \otimes id_2)(C_{\cl{E}})P_{\cl{S}})\rho\big)$ \\
\text{subject to:} & $\rho \geq 0, (id_2 \otimes T)(\rho) \geq 0$ \\
\ & $\Tr(\rho) = 1$ \\
\ & \ \\
\multicolumn{2}{c}{{\bf Dual problem}} \\
\text{maximize:} & $\lambda_1 - \big\|(T \otimes id_2)(Y) + (P_{\cl{S}}(T \otimes id_2)(\lambda_1 I - C_{\cl{E}})P_{\cl{S}})\big\|$ \\
\text{subject to:} & $Y \geq 0$
\end{tabular}
\end{matrix}
\end{align*}

Using the MATLAB code provided in \cite{JK10b} to solve this semidefinite program, we are able to approximate the distribution of the minimum gate fidelity when $n = 2$. Figure~\ref{fig:min_fid} shows the distribution of $\cl{F}_{\cl{E}}^{min}$ and $\overline{\cl{F}_{\cl{E}}}$ when the quantum channel $\cl{E}$ is chosen by picking a Haar-uniform unitary $U \in \cl{L}(\cl{H}_4) \otimes \cl{L}(\cl{H}_2)$ and then setting $\cl{E}(\rho) \equiv \Tr_1(U (\ketbra{0}{0} \otimes \rho) U^{\dagger})$.
}\end{remark}

\begin{figure}[ht]
\begin{center}
\includegraphics[width=0.9\textwidth]{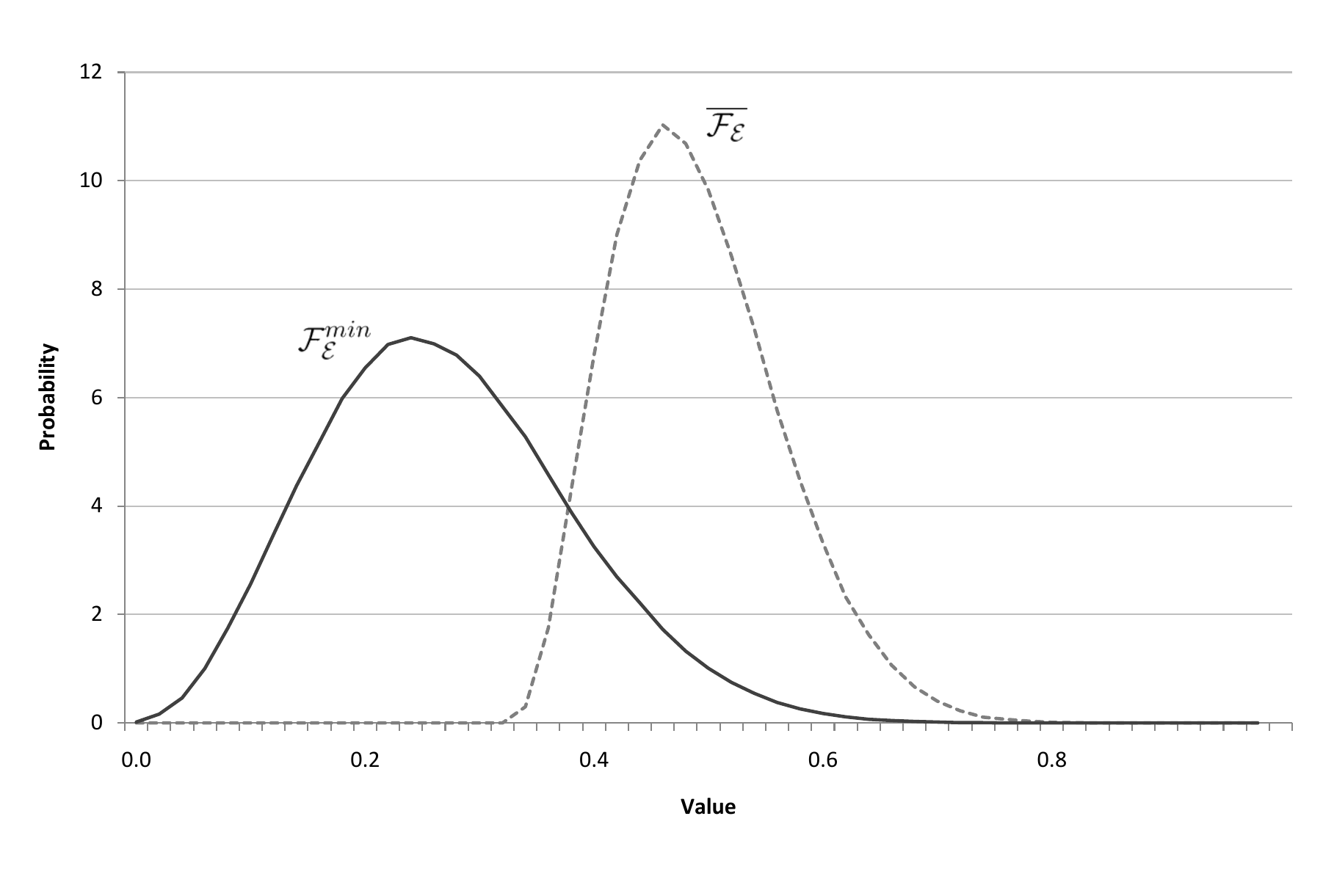}
\end{center}
\caption{Approximate distributions of $\cl{F}_{\cl{E}}^{min}$ and $\overline{\cl{F}_{\cl{E}}}$ when $n = 2$, based on $5 \cdot 10^5$ randomly-generated qubit channels.}\label{fig:min_fid}
\end{figure}

\section{Outlook}\label{sec:outlook}

We have seen that the quantum gate fidelity of a quantum channel $\cl{E}$ is characterized by a particular operator, $P_{\cl{S}}(T \otimes id_n)(C_{\cl{E}})P_{\cl{S}}$. Furthermore, the average and minimum gate fidelities can be expressed in terms of this operator relatively simply, as can the variance of the gate fidelity.

Although the $S(1)$-norm in general is NP-HARD to compute, this does not immediately imply that computing $\cl{F}_{\cl{E}}^{min}$ is NP-HARD as well, because the operator whose norm is computed in Theorem~\ref{thm:min_fid} has a special form (in particular, it is supported on the symmetric subspace). Determining whether or not $\cl{F}_{\cl{E}}^{min}$ is difficult to compute would be a great step toward a better understanding of gate fidelities. Also, we have provided a semidefinite program that computes $\cl{F}_{\cl{E}}^{min}$ for any qubit channel $\cl{E}$, but it might be possible to do better than this and find an explicit formula for the minimum gate fidelity in this situation.

\vspace{0.1in}

\noindent{\bf Acknowledgements.} We are grateful to the referees for helpful comments. Thanks are extended to Moritz Ernst for pointing out an error in an early version of Proposition~\ref{prop:avg_gate_fid}. N.J. was supported by an NSERC Canada Graduate Scholarship and the University of Guelph Brock Scholarship. D.W.K. was supported by Ontario Early Researcher Award 048142, NSERC Discovery Grant 400160 and NSERC Discovery Accelerator Supplement 400233.



\begin{thebibliography}{99}
\bibitem{NC00} M. Nielsen and I. Chuang, \emph{Quantum computation and quantum information}. Cambridge University Press (2000).

\bibitem{C75} M. Choi, \emph{Completely positive linear maps on complex matrices}. Linear Algebra Appl. {\bf 10}, 285--290 (1975).

\bibitem{Mag11} E. Magesan, \emph{Depolarizing behavior of quantum channels in higher dimensions}. Quantum Inf. Comput. {\bf 11}, 0466--0484 (2011). 

\bibitem{JK10a} N. Johnston and D. W. Kribs, \emph{A family of norms with applications in quantum information theory}. J. Math. Phys. {\bf 51}, 082202 (2010).

\bibitem{JK10b} N. Johnston and D. W. Kribs, \emph{A family of norms with applications in quantum information theory II}. Quantum Inf. Comput. {\bf 11} 1 \& 2, 104-–123 (2011).

\bibitem{Uhl76} A. Uhlmann, \emph{The ``transition probability'' in the state space of a $*$-algebra}. Rep. Math. Phys. {\bf 9}, 273--279 (1976).

\bibitem{Joz94} R. Jozsa, \emph{Fidelity for mixed quantum states}. J. Modern Opt. {\bf 41}, 2315--2323 (1994).

\bibitem{HHH99} M. Horodecki, P. Horodecki and R. Horodecki, \emph{General teleportation channel, singlet fraction, and quasidistillation}. Phys. Rev. A {\bf 60}, 1888 (1999).

\bibitem{N02} M. Nielsen, \emph{A simple formula for the average gate fidelity of a quantum dynamical operation}. Phys. Lett. A {\bf 303}, 249--252 (2002).

\bibitem{BOSBJ02} M. D. Bowdreya, D. K. L. Oia, A. J. Shorta, K. Banaszeka, and J. A. Jones, \emph{Fidelity of single qubit maps}. Phys. Lett. A {\bf 294}, 258--260 (2002).

\bibitem{GLN05} A. Gilchrist, N. K. Langford, and M. A. Nielsen, \emph{Distance measures to compare real and ideal quantum processes}. Phys. Rev. A {\bf 71}, 062310 (2005). 

\bibitem{EAZ05} J. Emerson, R. Alicki, and K. {\.Z}yczkowski, \emph{Scalable noise estimation with random unitary operators}. J. Opt. B {\bf 7}, S347--S352 (2005).

\bibitem{BZ06} I. Bengtsson and K. {\.Z}yczkowski, \emph{Geometry of quantum states: an introduction to quantum entanglement}. Cambridge University Press, Cambridge, UK (2006).

\bibitem{PMM08} L. H. Pedersen, N. M. M{\o}ller, and K. M{\o}lmer, \emph{The distribution of quantum fidelities}. Phys. Lett. A {\bf 372}, 7028--7032 (2011). 

\bibitem{MBE09} E. Magesan, R. Blume-Kohout, and J. Emerson, \emph{Gate fidelity fluctuations and quantum process invariants}. Phys. Rev. A {\bf 84}, 012309 (2011). 

\bibitem{KBO09} T. Karasawa, J. Gea-Banacloche, and M. Ozawa, \emph{Gate fidelity of arbitrary single-qubit gates constrained by conservation laws}. J. Phys. A: Math. Theor., {\bf 42}, 225303 (2009).

\bibitem{LRKKB11} H.-T. Lim, Y.-S. Ra, Y.-S. Kim, Y.-H. Kim, and J. Bae, \emph{Gate fidelities, quantum broadcasting, and assessing experimental realization}. E-print: arXiv:1106.5873 [quant-ph]

\bibitem{Tag24} T. Takagi, \emph{On an algebraic problem related to an analytic theorem of Caratheodory and Fejer}. Japan J. Math. {\bf 1}, 83--93  (1924).

\bibitem{HJ85} R. A. Horn and C. R. Johnson, \emph{Matrix analysis}. Cambridge University Press, 1985.

\bibitem{S11} E. St{\o}rmer, \emph{Mapping cones of positive maps}. Math. Scand. {\bf 108}, 223--232 (2011).

\bibitem{CKZ06} M.-D. Choi, D.~W. Kribs, and K. {\.Z}yczkowski, \emph{Higher-rank numerical ranges and compression problems}. Linear Algebra Appl., {\bf 418}, 828--839 (2006).

\bibitem{JKPP10} N. Johnston, D. W. Kribs, V. I. Paulsen and R. Pereira, \emph{Minimal and maximal operator spaces and operator systems in entanglement theory}. J. Funct. Anal. {\bf 260} 8, 2407--2423 (2011).

\bibitem{J10} N. Johnston, \emph{Characterizing operations preserving separability measures via linear preserver problems}. Linear Multilinear Algebra {\bf 59}, 1171--1187 (2011).

\bibitem{TT83} K. Takesaki and J. Tomiyama, \emph{On the geometry of positive maps in matrix algebras}. Math. Zeit. {\bf 184}, 101--108 (1983).

\bibitem{BFP04} F. Benatti, R. Floreanini, and M. Piani, \emph{Non-decomposable quantum dynamical semigroups and bound entangled states}. Open Sys. Inf. Dyn. {\bf 11}, 325--338 (2004).

\bibitem{P96} A. Peres, \emph{Separability criterion for density matrices}. Phys. Rev. Lett. {\bf 77}, 1413 (1996).

\bibitem{HHH96} M. Horodecki, P. Horodecki, and R. Horodecki, \emph{Separability of mixed states: necessary and sufficient conditions}. Phys. Lett. A {\bf 223}, 1 (1996).

\bibitem{B02} B.~M. Terhal, \emph{Detecting quantum entanglement}. J. Theor. Comp. Sci. {\bf 287}, 313--335 (2002).
 
\bibitem{DPS04} A.~C. Doherty, P.~A. Parrilo, and F.~M. Spedalieri, \emph{A complete family of separability criteria}. Phys. Rev. A {\bf 69}, 022308 (2004).

\bibitem{CKo09} D. Chru\'{s}ci\'{n}ski and A. Kossakowski, \emph{Spectral conditions for positive maps}. Comm. Math. Phys. {\bf 290}, 1051-–1064 (2009).

\bibitem{HHHH09} R. Horodecki, P. Horodecki, M. Horodecki, and Karol Horodecki, \emph{Quantum entanglement}. Rev. Mod. Phys. {\bf 81}, 865--942 (2009).

\bibitem{VB96} L. Vandenberghe and S. Boyd, \emph{Semidefinite programming}, SIAM \emph{Review}, 38(1), 49--95 (1996).

\bibitem{GLS93} M. Gr\"{o}tschel, L. Lov\'{a}sz, and A. Schrijver, \emph{Geometric Algorithms and Combinatorial Optimization}, Springer–Verlag, second corrected edition (1993).
    
\bibitem{NOP09} M. Navascu\'{e}s, M. Owari, and M.~B.~Plenio, \emph{Power of symmetric extensions for entanglement detection}. Phys. Rev. A {\bf 80}, 052306 (2009).
\end{thebibliography}
\end{document}